\documentclass[journal]{IEEEtran}

\usepackage[T1]{fontenc}
\usepackage{amsmath,amssymb,mathtools,bm,amsthm}
\usepackage{graphicx}
\usepackage{booktabs,multirow,array}
\usepackage{cite}
\usepackage{xcolor}
\usepackage{url}
\usepackage{tikz}
\usetikzlibrary{arrows.meta,positioning,calc,fit,shapes.geometric,shapes.symbols,decorations.pathreplacing}

\makeatletter
\renewcommand{\@IEEEfigurecaptionsepspace}{\vskip4pt\nointerlineskip}
\makeatother
\definecolor{gridblue}{RGB}{35,103,170}
\definecolor{supportgreen}{RGB}{34,139,94}
\definecolor{tradeoffred}{RGB}{185,45,45}
\definecolor{lightblue}{RGB}{231,241,250}
\definecolor{lightgreen}{RGB}{231,247,238}
\definecolor{lightred}{RGB}{252,237,237}

\newtheorem{theorem}{Theorem}
\newtheorem{proposition}{Proposition}

\newcommand{\RHinf}{\mathcal{RH}_{\infty}}
\newcommand{\Ppcc}{P_{\mathrm{PCC}}}
\newcommand{\Pdc}{P_{\mathrm{dc}}}
\newcommand{\Pb}{P_{\mathrm{BESS}}}

\begin{document}

\title{\huge Data Center Oscillation Isolation and Grid Support:\\
Fundamental Limits and Source-Selective Control}

\author{Cheng Feng,~\IEEEmembership{Member,~IEEE}}

\markboth{}%
{Feng: Data Center Oscillation Isolation and Grid Support}

\maketitle
\setlength{\abovedisplayskip}{0.7ex plus 2pt minus 1pt}
\setlength{\belowdisplayskip}{0.7ex plus 2pt minus 1pt}
\setlength{\belowdisplayshortskip}{0.7ex plus 2pt minus 1pt}

\begin{abstract}
The grid integration of large data centers poses a growing power-system stability challenge. Rapidly varying computing demand can induce oscillations, while the facilities must provide a prescribed response to grid disturbances. Existing controls address load fluctuations and grid support but do not establish their jointly achievable performance. This paper derives fundamental limits and develops a source-selective controller for simultaneous oscillation isolation and grid support. A coupled closed-loop model gives a condition for distinguishing disturbance sources and reveals a direct isolation-support conflict under grid-import-power feedback alone; additional electrical measurements can remove this pointwise conflict but leave a conditional cross-frequency sensitivity limit. Guided by these limits, we design a local residual-feedback storage controller through convex frequency-domain optimization with stability, isolation, and grid-response constraints across operating points. Tests on a single-generator interconnection and a modified nine-bus network show stronger oscillation isolation than equal-order optimized comparators while meeting the prescribed grid response across operating points.
\end{abstract}

\begin{IEEEkeywords}
Data centers, oscillation isolation, grid support, hybrid energy storage, performance limitations.
\end{IEEEkeywords}

\section{Introduction}
\label{sec:introduction}

The grid integration of large-scale data centers has become a significant power-system stability concern. Rapidly varying computing workloads can produce sustained power oscillations, while converter-interfaced equipment can respond abruptly to grid disturbances. The North American Electric Reliability Corporation (NERC) documented a 2023 event in which periodic 1-Hz perturbations at a data center repeatedly excited an 11-Hz electrical ringdown~\cite{nerc2025large_loads}. Following a 230-kV transmission fault in July 2024, data center load disconnected from the grid, raising system voltage and frequency~\cite{nerc2025incident}. These events call for two concurrent capabilities: attenuating workload-induced oscillations and providing a specified response when a disturbance originates in the grid.

Considerable effort has been devoted to characterizing related data center dynamics recently. Ref.~\cite{mishra2025inception} analyzes field measurements of a 14.7-Hz data-center oscillation, while Refs.~\cite{sun2022psu,sun2022system} show how power-supply impedance and facility distribution networks affect resonance. Moving from the facility to the power-system scale, Ref.~\cite{jimenez2025transient} includes UPS, cooling, and pulsed computing loads in a transient-stability model, and Ref.~\cite{chaudhary2026correlated} examines correlated demand across sites. Refs.~\cite{ko2025widearea,zhao2026datacenter,krajacic2026forced} further study workload forcing, converter interactions, and propagation into the interconnected grid.

Beyond modelling, control studies have first treated data centers as flexible resources for grid-frequency support. Ref.~\cite{li2013regulation} modulates server power to track a regulation signal, and Ref.~\cite{guruprasad2017regulation} coordinates workload scheduling with battery power. A UPS field pilot demonstrates fast frequency response~\cite{alapera2019ups}, while coordinated workload-UPS control extends this capability to frequency events~\cite{tao2025frequency}. These studies establish that data centers can provide active-power support, but their control objectives do not explicitly bound the oscillatory power exported to the grid by routine computing workloads. A separate line of work focuses on mitigating data-center-induced power fluctuations. Refs.~\cite{yang2018tieline,yang2024multicenter} coordinate workload scheduling and UPS storage to smooth tie-line power. Ref.~\cite{you2026hess} uses hybrid storage to compensate AI-workload variations, and Ref.~\cite{chen2026modeaware} targets frequency components near vulnerable grid modes. Although effective against load-origin disturbances, smoothing alone does not specify the data center response to external grid disturbance.

Recent storage-based work considers both load variation and grid events. Ref.~\cite{lyu2026pcc} reports reduced data-center power fluctuations together with improved voltage and frequency dynamics across grid strengths. Ref.~\cite{abudyak2026gfm} combines load-transient compensation with voltage support, Ref.~\cite{ross2026gfm} examines fast load fluctuations and low-voltage faults, and Ref.~\cite{azizi2026ups} combines power smoothing with grid-connected and islanded operation. These studies demonstrate multiple useful functions, but do not formulate attenuation of workload-induced grid-import fluctuations and response to grid disturbances as two simultaneous closed-loop specifications. Grid and server disturbances jointly affect the powers measured at the grid connection and within the data center, so those measurements cannot automatically be treated as separate disturbance inputs. Moreover, demonstrating both functions in selected cases does not establish which responses are simultaneously achievable across frequencies and operating points. This motivates an analysis of source distinguishability and closed-loop performance limits before controller design.

This paper addresses these questions using local energy storage to isolate server-demand oscillations while tracking a prescribed grid response. The main contributions are:
\begin{enumerate}[\IEEEnocalcleftmargintrue\setlength{\leftmargin}{2pt}\setlength{\itemindent}{1.2em}\setlength{\labelwidth}{0.7em}\setlength{\labelsep}{0.5em}]
\item Develop a coupled data-center--grid port model and derive a rank condition for disturbance-source identifiability. The condition explains why measuring both grid-import and data-center powers does not, by itself, separate grid- and server-origin disturbances in the closed loop.
\item Derive a pointwise performance bound that exposes the conflict between oscillation isolation and grid-response tracking under grid-import-power feedback alone. With additional electrical measurements, identify when a cross-frequency sensitivity limit remains despite improved source discrimination.
\item Design a local residual-feedback storage controller through a convex frequency-domain program with stability, isolation, and grid-response constraints. Equal-order comparisons in single-generator and modified IEEE nine-bus systems evaluate independently designed controllers across operating points.
\end{enumerate}

The remainder of this paper is organized as follows. Section~\ref{sec:model} develops the data-center--grid model and closed-loop transfers. Section~\ref{sec:limits} derives the performance limitations, Section~\ref{sec:controller} presents the controller design, and Section~\ref{sec:case_studies} evaluates frequency-domain and time-domain performance across operating points.

\section{System Model and Closed-Loop Control Framework}
\label{sec:model}

\subsection{Physical Architecture and Small-Signal Port Model}

Fig.~\ref{fig:physical} shows the considered grid-connected AI data center. A hybrid energy storage system (HESS), consisting of a battery and a supercapacitor, is connected at the point of common coupling (PCC). In addition to isolating server-demand-induced oscillations from the grid, the HESS is required to preserve or deliberately shape the response of the data center to grid-side events, such as frequency deviations, voltage disturbances, and poorly damped modes.

Define the PCC grid-port power as $p=\Delta\Ppcc$, realized HESS power as $u=\Delta\Pb$, and data-center power as $d=\Delta\Pdc$. The local electrical variables are the PCC voltage-magnitude deviation $v=\Delta V_{\rm PCC}$ and the local frequency deviation $f=\Delta f_{\rm PCC}$ derived from the PCC voltage-phasor angle. Collect these two variables as $\bm{\xi}=[v,f]^{\mathsf T}$. The input $g$ represents a disturbance at a specified location in the external grid. In Figs.~\ref{fig:physical} and~\ref{fig:framework}, $g=\Delta P_{\rm ext}$ is an increase in demand at a remote bus. The other external input is the server-demand perturbation $w=\Delta P_{\rm dem}$, generated by changes in computing activity. At an operating point, the data-center--grid dynamics are
\begin{equation}
 \begin{aligned}
  d&=D_w(s)w+\bm{D}_\xi^{\mathsf T}(s)\bm{\xi},\quad \bm{D}_\xi=[D_v\ D_f]^{\mathsf T},\\[-0.3mm]
  \bm{\xi}&=\bm{G}_p(s)p+\bm{G}_g(s)g,\qquad p=d-u.
 \end{aligned}
 \label{eq:physical_loop}
\end{equation}
The first equation describes the aggregate dynamics of the servers and their power-conversion chain, represented by $\bm{G}_{\rm dc}=[D_w\ \bm{D}_\xi^{\mathsf T}]$ in Fig.~\ref{fig:physical}. Its inputs are server demand $w$, PCC voltage $v$, and PCC frequency $f$; its output $d$ is the power drawn by the data center. Specifically, $D_w$ describes how server demand changes $d$ with PCC electrical conditions fixed, whereas the entries of $\bm{D}_\xi$ describe the effects of voltage and frequency at a fixed server demand. The second equation describes the external grid, represented by $\bm{G}_{\rm g}=[\bm{G}_p\ \bm{G}_g]$. It takes PCC withdrawal $p$ and the remote disturbance $g$ as inputs and produces the local voltage and frequency deviations $\bm{\xi}=[v,f]^{\mathsf T}$. Here $\bm{G}_p$ gives the response to PCC withdrawal and $\bm{G}_g$ the response to the remote disturbance; their first and second entries describe voltage and frequency, respectively. These transfers are obtained by linearizing the external-grid model and selecting the corresponding input and output channels~\cite{alican2026propagation}. The formulation accommodates general transfer-function models.

At the PCC bus, the grid and HESS jointly supply the data-center demand with $p=d-u$. This balance connects the two physical blocks in Fig.~\ref{fig:framework}: a change in $d$ or $u$ changes $p$, the external grid responds through $\bm{G}_p$, and the resulting voltage and frequency variations feed back to the data center through $\bm{D}_\xi$. An external disturbance enters the same loop through $\bm{G}_g$. Eliminating $d$ and $\bm{\xi}$ from Eq.~\eqref{eq:physical_loop} gives $(1-\bm{D}_\xi^{\mathsf T}\bm{G}_p)p=D_ww+\bm{D}_\xi^{\mathsf T}\bm{G}_gg-u$. Define the scalar return difference $\Delta=1-\bm{D}_\xi^{\mathsf T}\bm{G}_p$. The resulting transfers from server demand $w$, grid disturbance $g$, and HESS power $u$ to $p$ and $\bm y$ are
{\setlength{\abovedisplayskip}{0pt}\setlength{\abovedisplayshortskip}{0pt}%
\par\vspace{-0.45\baselineskip}
\begin{subequations}
	\label{eq:port_model}
	\begin{align}
		p &= P_w(s)w+P_g(s)g+P_u(s)u,\label{eq:port_power}\\
		\bm{y} &= \bm{h}_w(s)w+\bm{h}_g(s)g+\bm{h}_u(s)u,\label{eq:port_measurement}
	\end{align}
\end{subequations}}
where
\begin{equation}
 P_w=\frac{D_w}{\Delta},\quad
 P_g=\frac{\bm{D}_\xi^{\mathsf T}\bm{G}_g}{\Delta},\quad P_u=-\frac{1}{\Delta},
 \label{eq:reduced_port_maps}
\end{equation}
and
\begin{equation}
 \bm{h}_w=\begin{bmatrix}P_w\\\bm{G}_pP_w\end{bmatrix},
 \bm{h}_g=\begin{bmatrix}P_g\\\bm{G}_pP_g+\bm{G}_g\end{bmatrix},
 \bm{h}_u=\begin{bmatrix}P_u\\\bm{G}_pP_u\end{bmatrix}.
 \label{eq:physical_signatures}
\end{equation}
The common denominator $\Delta$ in \eqref{eq:reduced_port_maps} retains the physical grid and data-center feedback.

\begin{figure}[!t]
	\centering
	\resizebox{\linewidth}{!}{%
		\begin{tikzpicture}[
			>=Stealth,font=\scriptsize,
			elec/.style={line width=0.85pt},
			bus/.style={line width=2.1pt},
			pflow/.style={-{Stealth[length=2.0mm,width=1.4mm]},draw=gridblue,line width=0.85pt,shorten >=0.8pt},
			sig/.style={-{Stealth[length=2.0mm,width=1.4mm]},draw=supportgreen,line width=0.8pt,shorten >=0.8pt},
			meas/.style={-{Stealth[length=2.0mm,width=1.4mm]},draw=gridblue,dashed,line width=0.8pt,shorten >=0.8pt},
			disturb/.style={-{Stealth[length=2.0mm,width=1.4mm]},draw=tradeoffred,line width=0.8pt,shorten >=0.8pt},
			block/.style={draw,rounded corners=1pt,minimum height=7mm,align=center,fill=white},
			blue/.style={block,draw=gridblue,fill=lightblue},
			green/.style={block,draw=supportgreen,fill=lightgreen}
			]
			\node[draw=black,cloud,cloud puffs=10,cloud puff arc=120,aspect=1.5,minimum width=12mm,minimum height=8mm,inner sep=0pt,line width=0.85pt,fill=none] (grid) at (0.4,0) {};
			\node[below=1mm of grid,align=center] {External grid\\unknown dynamics};
			\draw[elec] (grid.east) -- (1.35,0);
			\draw[elec] (1.58,0) circle (0.23);
			\draw[elec] (1.88,0) circle (0.23);
			\draw[elec] (2.11,0) -- (3.0,0);
			\draw[bus] (3.0,-0.38) -- (3.0,0.38);
			\node[below right=0.1mm and 0.8mm] at (3.0,-0.16) {PCC};
			\coordinate (pccA) at (3.0,0);
			\draw[elec] (pccA) -- (4.25,0);
			\node[block,minimum width=25mm] (train) at (5.50,0)
			{AFE/UPS/dc link $+$ PSU};
			\draw[elec] (train.east) -- (6.90,0);
			\node[block,minimum width=21mm] (server) at (7.95,0)
			{AI servers\\GPU/CPU};
			
			\draw[pflow] (1.0,0.38) -- node[above,text=black] {$\Ppcc$} (2.68,0.38);
			\draw[pflow] (3.28,0.38) -- node[above,text=black] {$\Pdc$} (4.18,0.38);
			\draw[pflow,draw=supportgreen] (2.62,0.78) -- node[above,text=black]
			{grid support} (1.02,0.78);
			\draw[pflow,draw=supportgreen] (3.30,0.78) -- node[above,text=black]
			{oscillation isolation} (5.90,0.78);
			
			\node[green,minimum width=24mm] (hess) at (3.0,-1.55)
			{HESS\\battery $+$ supercapacitor};
			\draw[elec] (pccA) -- (hess.north);
			\draw[pflow,<->] (2.72,-0.45) -- node[left,text=black] {$\Pb$} (2.72,-1.13);
			
			\node[blue,minimum width=28mm] (controller) at (6.15,1.62)
			{Source-selective controller\\$-\bm{K}^{\mathsf T}(s)$};
			\node[blue,minimum width=12mm] (meter) at (3.0,1.62) {$P,V,f$};
			\draw[meas] (3.0,0.38) -- (meter.south);
			\draw[meas] (meter.east) -- node[above,text=black] {$\bm{y}$} (controller.west);
			\coordinate (outerA) at (9.20,1.62);
			\draw[sig] (controller.east) -- node[above,text=black] {$u$} (outerA)
			-- (9.20,-1.55) -- (hess.east);
			\node[text=gridblue] at (6.15,2.30) {grid-support target: $T_g^\star$};
			
			\draw[tradeoffred,line width=0.75pt,-]
			(0.00,1.35) -- (0.16,1.35) -- (0.25,1.52) -- (0.36,1.18)
			-- (0.48,1.58) -- (0.61,1.25) -- (0.75,1.47) -- (0.91,1.35);
			\node[text=tradeoffred,align=center] at (0.75,1.87)
			{remote disturbance\\$g(t)=\Delta P_{\rm ext}$};
			\draw[disturb] (0.46,1.13) -- (grid.north);
			
			\draw[tradeoffred,line width=0.8pt,-]
			(7.05,-0.88) -- (7.20,-0.88) -- (7.20,-0.60) -- (7.40,-0.60)
			-- (7.40,-0.88) -- (7.60,-0.88) -- (7.60,-0.60) -- (7.80,-0.60)
			-- (7.80,-0.88) -- (7.94,-0.88);
			\node[text=tradeoffred,align=center] at (7.55,-1.18)
			{$w(t)=\Delta P_{\rm dem}$\\GPU power oscillation};
			\draw[disturb] (8.00,-0.78) -- (server.south);
		\end{tikzpicture}%
	}
	\caption{Physical architecture. Server demand $w$ originates inside the data center; $g$ is an external-grid disturbance, illustrated by a power imbalance at a remote bus. The cloud represents external-grid dynamics with model uncertainty. The HESS uses local PCC power, voltage, and frequency for oscillation isolation and grid support.}
	\label{fig:physical}
\end{figure}

\begin{figure}[!t]
	\centering
	\begin{tikzpicture}[
		x=1mm,y=1mm,font=\scriptsize,
		>=Stealth,
		wire/.style={line width=0.65pt},
		signal/.style={wire,-{Stealth[length=1.6mm,width=1.15mm]}},
		physical/.style={signal,draw=gridblue},
		measurement/.style={physical,dashed},
		disturbance/.style={signal,draw=tradeoffred},
		control/.style={signal,draw=supportgreen},
		block/.style={draw,wire,rounded corners=1pt,
			minimum width=16mm,minimum height=7mm,
			inner sep=1mm,align=center,fill=white},
		green/.style={block,draw=supportgreen,fill=lightgreen},
		sum/.style={draw,circle,wire,inner sep=0pt,minimum size=4.6mm,fill=white},
		label/.style={inner sep=0.5mm,text=black}
		]
		\node[block,minimum width=18mm,minimum height=11mm] (dc) at (25,30)
		{Data center\\$\bm{G}_{\rm dc}(s)$\\$[D_w\quad \bm{D}_\xi^{\mathsf T}]$};
		\node[sum] (balance) at (44,30) {};
		\node[block,minimum width=18mm,minimum height=11mm] (gridB) at (65,30)
		{External grid\\$\bm{G}_{\rm g}(s)$\\$[\bm{G}_p\quad \bm{G}_g]$};
		\node[green,minimum width=20mm] (feedback) at (44,10)
		{HESS control\\$u=-\bm{K}^{\mathsf T}(s)\bm y$\\$\bm{K}=[K_p\ K_v\ K_f]^{\mathsf T}$};
		
		\draw[disturbance] (0,30) -- node[above=0.7mm,label,text=tradeoffred]
		{$w=\Delta P_{\rm dem}$} (dc.west);
		\draw[disturbance] (86,30) -- node[above=6.8mm,label,text=tradeoffred]
		{$g=\Delta P_{\rm ext}$} (gridB.east);
		
		\draw[physical] (dc.east) -- node[above=6.8mm,label,pos=0.20]
		{$d=\Delta P_{\rm dc}$} (balance.west);
		\node at (40.5,27.8) {$+$};
		\node at (46.4,26.4) {$-$};
		\draw[physical] (balance.east) -- node[above=6.8mm,label,pos=0.85,text=tradeoffred]
		{$p=\Delta P_{\rm PCC}$} (gridB.west);
		\coordinate (pTap) at (50,30);
		\fill[tradeoffred] (pTap) circle (0.45mm);
		
		\coordinate (vTap) at (65,10);
		\draw[wire,draw=gridblue] (gridB.south) -- (vTap);
		\draw[physical] (vTap) -- (65,-1) -- node[below=0.7mm,label]
		{$\bm{\xi}=[v,f]^{\mathsf T}=[\Delta V_{\rm PCC},\Delta f_{\rm PCC}]^{\mathsf T}$} (25,-1) -- (dc.south);
		\fill[gridblue] (vTap) circle (0.45mm);
		
		\draw[measurement] (pTap) -- node[right,label] {$p$}
		([xshift=6mm]feedback.north);
		\draw[measurement] (vTap) -- node[above,label] {$\bm{\xi}$} (feedback.east);
		\draw[control] (feedback.west) -- (30,10) -- (30,22)
		-- node[below=0.5mm,label,pos=0.75] {$u=\Delta P_{\rm BESS}$} (44,22)
		-- (balance.south);
	\end{tikzpicture}
	\caption{Closed-loop block diagram. The data center maps $(w,\bm{\xi})$ to $d$; the external grid maps $(p,g)$ to the local voltage/frequency vector $\bm{\xi}$. The junction imposes $p=d-u$. Dashed branches supply $\bm{y}=[p,v,f]^{\mathsf T}$ to the HESS controller, with $u=-K_pp-K_vv-K_ff$ and actuator dynamics included. Closing the loop gives $T_w:w\to p$ and $T_g:g\to p$; $v$ and $f$ are endogenous outputs of the coupled system.}
	\label{fig:framework}
\end{figure}

\subsection{Closed-Loop Objectives}

The HESS feedback law in Fig.~\ref{fig:framework} is $u=-\bm{K}^{\mathsf T}(s)\bm{y}=-K_p(s)p-K_v(s)v-K_f(s)f$. Here $\bm{K}(s)=[K_p(s)\ K_v(s)\ K_f(s)]^{\mathsf T}$ includes HESS tracking dynamics for PCC power, voltage, and frequency. Closing the loop in \eqref{eq:port_power}--\eqref{eq:port_measurement} gives the PCC power response
\begin{equation}
  p=\underbrace{T_w(s)w}_{\text{server-demand contribution}}
    +\underbrace{T_g(s)g}_{\text{grid-event contribution}},
  \label{eq:closed_loop_simple}
\end{equation}
where
\begin{equation}
 T_w=P_w-\frac{P_u\bm{K}^{\mathsf T}\bm{h}_w}{1+\bm{K}^{\mathsf T}\bm{h}_u},\quad T_g=P_g-\frac{P_u\bm{K}^{\mathsf T}\bm{h}_g}{1+\bm{K}^{\mathsf T}\bm{h}_u}.
 \label{eq:general_feedback_maps}
\end{equation}
Over the prescribed oscillation band $\Omega$, the HESS should suppress the effect of server-demand changes on grid import while retaining the desired response to grid events.
\begin{equation}
\begin{aligned}
  \text{Isolation:}\quad&T_w=\left.\frac{p}{w}\right|_{g=0}\approx0,\\
  \text{Grid response:}\quad&T_g=\left.\frac{p}{g}\right|_{w=0}\approx T_g^\star,
 \end{aligned}
  \label{eq:two_objectives}
\end{equation}
where the target $T_g^\star$ specifies this desired grid-event response. 

To specify $T_g^\star$ in \eqref{eq:two_objectives}, consider a desired PCC power--frequency characteristic $p=H_f(s)f$ during grid events. With $w=0$, \eqref{eq:closed_loop_simple} gives $p=T_gg$, while the frequency component of \eqref{eq:physical_loop} gives $f=(G_{fp}T_g+G_{fg})g$; $G_{fp}$ and $G_{fg}$ are the frequency entries of $\bm{G}_p$ and $\bm{G}_g$. Thus the desired characteristic requires $T_g^\star=H_f(G_{fp}T_g^\star+G_{fg})$, or
\begin{equation}
 T_g^\star=\frac{H_fG_{fg}}{1-H_fG_{fp}}.
 \label{eq:frequency_support_target}
\end{equation}
For example, positive low-frequency gain of $H_f$ requests reduced grid import when frequency falls.

\section{Fundamental Limits and Trade-Off}
\label{sec:limits}
This section examines what limits the simultaneous objectives in \eqref{eq:two_objectives}. We first establish the isolation--support conflict with PCC-power feedback and identify when additional measurements distinguish the two sources. We then derive the cross-band isolation constraint that remains with multiple measurements and show how sensitivity peaks can amplify grid-response errors under model variation. These results guide measurement selection and performance specifications for the controller design in Section~\ref{sec:controller}.

\subsection{Single-Measurement Isolation-Support Limitation}

The most direct strategy is to measure PCC power and command the HESS to oppose its fluctuations. The difficulty is that a fluctuation in $p$ may originate from either server demand or the grid. When both sources act at the same frequency, a single power measurement cannot distinguish them. To specialize \eqref{eq:general_feedback_maps} to power-only feedback, set $\bm{K}=[K_p,0,0]^{\mathsf T}$, giving $u=-K_pp$. The denominator then becomes $1+\bm{K}^{\mathsf T}\bm{h}_u=1+K_pP_u$. Defining $S=(1+K_pP_u)^{-1}$ and substituting into \eqref{eq:general_feedback_maps} gives
\begin{equation}
 T_w=\frac{P_w}{1+K_pP_u}=SP_w,\quad T_g=\frac{P_g}{1+K_pP_u}=SP_g.
 \label{eq:common_sensitivity}
\end{equation}
Thus the HESS applies the same factor $S$ to both source responses. Making $S$ small isolates server demand, but also suppresses the grid-event power response, even when that response is desired. For a prescribed target $T_g^\star$, define server-demand leakage and grid-response error, normalized by their respective uncontrolled responses, as
\begin{equation}
 A=\left|\frac{T_w}{P_w}\right|=|S|,
 \quad
 E=\left|\frac{T_g-T_g^\star}{P_g}\right|=\left|S-\frac{T_g^\star}{P_g}\right|.
\end{equation}
Here $A$ is the fraction of the uncontrolled PCC-power oscillation that remains under the same server-demand disturbance; $A=0$ denotes complete isolation. The quantity $E$ measures the complex deviation from the prescribed grid response, including both magnitude and phase, relative to $|P_g|$; $E=0$ denotes exact tracking of $T_g^\star$. Both should be small. The following theorem bounds their simultaneous improvement at the same frequency when the two sources are observed through PCC power alone.
\begin{theorem}[Single-port performance limit]
For any prescribed grid-response target $T_g^\star$, a well-posed LTI controller using only PCC power satisfies, at every frequency where the transfers are finite and $P_wP_g\ne0$,
\begin{equation}
A(\omega)+E(\omega)\geq\left|\frac{T_g^\star(j\omega)}{P_g(j\omega)}\right|.
 \label{eq:tradeoff}
\end{equation}
\end{theorem}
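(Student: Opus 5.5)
The plan is to reduce the claim to the triangle inequality in $\mathbb{C}$ at a fixed frequency, using the common-sensitivity structure \eqref{eq:common_sensitivity} already derived for power-only feedback.

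Fix $\omega$ such that all transfers are finite and $P_w(j\omega)P_g(j\omega)\neq0$. Well-posedness means $1+K_pP_u\neq0$ at this frequency, so $S(j\omega)=(1+K_pP_u)^{-1}$ is a finite complex number. By \eqref{eq:common_sensitivity}, $T_w=SP_w$ and $T_g=SP_g$. Since $P_w\neq0$ and $P_g\neq0$, we obtain $A=|S|$ and $E=|S-T_g^\star/P_g|$. Dividing by $P_w$ and $P_g$ is legitimate precisely because of this nonvanishing hypothesis, which is why it appears in the statement.

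Next, write $T_g^\star/P_g = S-(S-T_g^\star/P_g)$. The triangle inequality then gives
\begin{equation*}
\left|\frac{T_g^\star}{P_g}\right|\le |S|+\left|S-\frac{T_g^\star}{P_g}\right| = A+E,
\end{equation*}
which is \eqref{eq:tradeoff}. The bound holds for every complex value of $S$, so it holds for any LTI $K_p$, and the target $T_g^\star$ is arbitrary.

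The only real care needed is justifying the reduction to a common factor. I would verify it by specializing \eqref{eq:general_feedback_maps} with $\bm{K}=[K_p,0,0]^{\mathsf T}$. The first entries of $\bm h_w$, $\bm h_g$, and $\bm h_u$ are $P_w$, $P_g$, and $P_u$, so
\begin{equation*}
T_w = P_w - \frac{P_uK_pP_w}{1+K_pP_u} = \frac{P_w}{1+K_pP_u},
\end{equation*}
and the analogous computation gives $T_g=P_g/(1+K_pP_u)$. I do not expect a genuine obstacle; the main point is to handle the excluded frequencies by hypothesis rather than argument.

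As a remark, I would note that equality holds whenever $S$ lies on the segment between $0$ and $T_g^\star/P_g$. This shows the bound is pointwise tight: improving isolation (smaller $A$) necessarily costs tracking accuracy (larger $E$) at the same frequency.
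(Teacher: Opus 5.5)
Your proof is correct and is the same argument as the paper's: specialize to the common-sensitivity form $T_w=SP_w$, $T_g=SP_g$ of \eqref{eq:common_sensitivity}, then apply the triangle inequality to $T_g^\star/P_g=S-(S-T_g^\star/P_g)$. Your explicit derivation of \eqref{eq:common_sensitivity} from \eqref{eq:general_feedback_maps} and your tightness remark are fine additions; one small correction is that finiteness of $S(j\omega)$ follows from the hypothesis that the transfers are finite with $P_w\ne0$, not from well-posedness, which concerns only the loop at $s=\infty$ and does not exclude $1+K_pP_u$ vanishing at a finite frequency.
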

\begin{proof}
By \eqref{eq:common_sensitivity} and the triangle inequality, $|T_g^\star/P_g|=|S-(S-T_g^\star/P_g)|\leq|S|+|S-T_g^\star/P_g|=A+E$.
\end{proof}

For response preservation, $T_g^\star=P_g$, the bound reduces to $A+E\geq1$: achieving $A\leq\epsilon$ necessarily gives $E\geq1-\epsilon$. Fig.~\ref{fig:single_port_limit} plots server-demand leakage $A$ (blue solid) and grid-response error $E$ (red dashed) in dB. The low-frequency limits are $A=0.1$ and $E=0.9$, whereas at high frequencies $A\to1$ and $E\to0$. This illustrates the consequence: reducing server-demand-induced PCC ripple to 10\% of its uncontrolled amplitude forces at least 90\% normalized error in the desired grid-event power response at that same frequency. The general bound also covers active-support targets; its right-hand side is not always one.

\begin{figure}[!t]
\centering
\begin{tikzpicture}[x=1mm,y=1mm,font=\scriptsize,>=Stealth,line width=0.55pt]
\path[use as bounding box] (0,2) rectangle (85,44);
\foreach \x/\t in {19.60/0.1,38.20/1,56.80/10,75.40/100} {
  \draw[gray!20] (\x,12) -- (\x,38);
  \draw (\x,12) -- (\x,11.2);
  \node[below=0.5mm] at (\x,12) {$\t$};
}
\foreach \y/\t in {12/-40,23.818/-20,35.636/0} {
  \draw[gray!25] (14,\y) -- (81,\y);
  \node[left=0.7mm] at (14,\y) {$\t$};
}
\draw[->] (14,12) -- (83,12);
\draw[->] (14,12) -- (14,39);
\node[rotate=90] at (3,25) {Magnitude (dB)};
\node at (48,4) {Frequency (Hz)};
\begin{scope}
\clip (14,12) rectangle (81,38);
\draw[gridblue,line width=1pt] plot coordinates {(14.000,23.843) (14.419,23.846) (14.838,23.849) (15.256,23.853) (15.675,23.856) (16.094,23.860) (16.512,23.865) (16.931,23.870) (17.350,23.876) (17.769,23.882) (18.188,23.889) (18.606,23.896) (19.025,23.905) (19.444,23.914) (19.863,23.924) (20.281,23.936) (20.700,23.948) (21.119,23.962) (21.538,23.977) (21.956,23.994) (22.375,24.013) (22.794,24.033) (23.212,24.055) (23.631,24.080) (24.050,24.107) (24.469,24.136) (24.887,24.169) (25.306,24.204) (25.725,24.243) (26.144,24.286) (26.563,24.332) (26.981,24.382) (27.400,24.436) (27.819,24.496) (28.238,24.560) (28.656,24.629) (29.075,24.703) (29.494,24.783) (29.913,24.869) (30.331,24.961) (30.750,25.060) (31.169,25.164) (31.587,25.275) (32.006,25.393) (32.425,25.517) (32.844,25.648) (33.263,25.786) (33.681,25.930) (34.100,26.080) (34.519,26.237) (34.938,26.400) (35.356,26.569) (35.775,26.743) (36.194,26.923) (36.612,27.107) (37.031,27.297) (37.450,27.491) (37.869,27.689) (38.288,27.890) (38.706,28.095) (39.125,28.303) (39.544,28.514) (39.962,28.726) (40.381,28.940) (40.800,29.156) (41.219,29.373) (41.638,29.590) (42.056,29.808) (42.475,30.026) (42.894,30.243) (43.313,30.459) (43.731,30.673) (44.150,30.887) (44.569,31.098) (44.987,31.306) (45.406,31.512) (45.825,31.714) (46.244,31.913) (46.663,32.108) (47.081,32.299) (47.500,32.485) (47.919,32.666) (48.337,32.842) (48.756,33.012) (49.175,33.176) (49.594,33.334) (50.013,33.487) (50.431,33.632) (50.850,33.771) (51.269,33.904) (51.688,34.030) (52.106,34.150) (52.525,34.262) (52.944,34.369) (53.362,34.468) (53.781,34.562) (54.200,34.650) (54.619,34.731) (55.038,34.807) (55.456,34.878) (55.875,34.943) (56.294,35.003) (56.712,35.059) (57.131,35.110) (57.550,35.157) (57.969,35.201) (58.388,35.240) (58.806,35.277) (59.225,35.310) (59.644,35.340) (60.063,35.368) (60.481,35.393) (60.900,35.416) (61.319,35.437) (61.737,35.456) (62.156,35.473) (62.575,35.489) (62.994,35.503) (63.413,35.516) (63.831,35.527) (64.250,35.538) (64.669,35.547) (65.088,35.556) (65.506,35.564) (65.925,35.571) (66.344,35.577) (66.763,35.583) (67.181,35.588) (67.600,35.593) (68.019,35.597) (68.438,35.601) (68.856,35.604) (69.275,35.608) (69.694,35.610) (70.112,35.613) (70.531,35.615) (70.950,35.617) (71.369,35.619) (71.787,35.621) (72.206,35.622) (72.625,35.624) (73.044,35.625) (73.463,35.626) (73.881,35.627) (74.300,35.628) (74.719,35.629) (75.138,35.630) (75.556,35.630) (75.975,35.631) (76.394,35.631) (76.813,35.632) (77.231,35.632) (77.650,35.633) (78.069,35.633) (78.487,35.633) (78.906,35.634) (79.325,35.634) (79.744,35.634) (80.162,35.634) (80.581,35.635) (81.000,35.635)};
\draw[tradeoffred,densely dashed,line width=1pt] plot coordinates {(14.000,35.095) (14.419,35.095) (14.838,35.095) (15.256,35.095) (15.675,35.095) (16.094,35.095) (16.512,35.095) (16.931,35.095) (17.350,35.095) (17.769,35.095) (18.188,35.095) (18.606,35.095) (19.025,35.095) (19.444,35.095) (19.863,35.094) (20.281,35.094) (20.700,35.094) (21.119,35.094) (21.538,35.094) (21.956,35.094) (22.375,35.094) (22.794,35.093) (23.212,35.093) (23.631,35.093) (24.050,35.093) (24.469,35.092) (24.887,35.092) (25.306,35.091) (25.725,35.091) (26.144,35.090) (26.563,35.090) (26.981,35.089) (27.400,35.089) (27.819,35.088) (28.238,35.087) (28.656,35.086) (29.075,35.085) (29.494,35.084) (29.913,35.082) (30.331,35.081) (30.750,35.079) (31.169,35.078) (31.587,35.076) (32.006,35.074) (32.425,35.071) (32.844,35.068) (33.263,35.066) (33.681,35.062) (34.100,35.059) (34.519,35.055) (34.938,35.050) (35.356,35.045) (35.775,35.040) (36.194,35.034) (36.612,35.027) (37.031,35.020) (37.450,35.012) (37.869,35.003) (38.288,34.993) (38.706,34.982) (39.125,34.970) (39.544,34.956) (39.962,34.941) (40.381,34.925) (40.800,34.907) (41.219,34.888) (41.638,34.866) (42.056,34.842) (42.475,34.816) (42.894,34.787) (43.313,34.755) (43.731,34.721) (44.150,34.683) (44.569,34.641) (44.987,34.596) (45.406,34.547) (45.825,34.494) (46.244,34.436) (46.663,34.373) (47.081,34.305) (47.500,34.232) (47.919,34.153) (48.337,34.068) (48.756,33.978) (49.175,33.881) (49.594,33.777) (50.013,33.667) (50.431,33.550) (50.850,33.426) (51.269,33.295) (51.688,33.158) (52.106,33.013) (52.525,32.862) (52.944,32.704) (53.362,32.539) (53.781,32.368) (54.200,32.191) (54.619,32.008) (55.038,31.819) (55.456,31.624) (55.875,31.424) (56.294,31.219) (56.712,31.010) (57.131,30.796) (57.550,30.577) (57.969,30.355) (58.388,30.129) (58.806,29.900) (59.225,29.667) (59.644,29.432) (60.063,29.194) (60.481,28.953) (60.900,28.711) (61.319,28.466) (61.737,28.219) (62.156,27.970) (62.575,27.720) (62.994,27.468) (63.413,27.215) (63.831,26.961) (64.250,26.705) (64.669,26.449) (65.088,26.191) (65.506,25.933) (65.925,25.674) (66.344,25.415) (66.763,25.154) (67.181,24.894) (67.600,24.632) (68.019,24.370) (68.438,24.108) (68.856,23.846) (69.275,23.583) (69.694,23.320) (70.112,23.056) (70.531,22.792) (70.950,22.528) (71.369,22.264) (71.787,22.000) (72.206,21.735) (72.625,21.471) (73.044,21.206) (73.463,20.941) (73.881,20.676) (74.300,20.411) (74.719,20.145) (75.138,19.880) (75.556,19.615) (75.975,19.349) (76.394,19.084) (76.813,18.818) (77.231,18.553) (77.650,18.287) (78.069,18.021) (78.487,17.756) (78.906,17.490) (79.325,17.224) (79.744,16.958) (80.162,16.692) (80.581,16.426) (81.000,16.161)};
\end{scope}
\node[text=tradeoffred,align=center,fill=white,inner sep=0.3mm] at (32,40.5) {$E=|(T_g-P_g)/P_g|$\\grid-response error};
\draw[tradeoffred,-{Stealth[length=1.2mm]}] (29,37) -- (29,35.1);
\node[text=gridblue,align=center,fill=white,inner sep=0.3mm] at (33,17.5) {$A=|T_w/P_w|$\\server-demand leakage};
\draw[gridblue,-{Stealth[length=1.2mm]}] (27,21) -- (24,24.1);
\node[text=gridblue,fill=white,inner sep=0.3mm] at (65,38.3) {weak isolation};
\draw[gridblue,-{Stealth[length=1.2mm]}] (65,36.7) -- (69,35.6);
\node[text=tradeoffred,fill=white,inner sep=0.3mm] at (62,17) {small grid error};
\draw[tradeoffred,-{Stealth[length=1.2mm]}] (67,18.4) -- (71.5,22.2);
\end{tikzpicture}
\caption{Single-port limitation with $T_g^\star=P_g$. Blue solid: server-demand leakage $A=|S|$; red dashed: grid-response error $E=|1-S|$. Curves use $S=(s+a)/(s+10a)$, $a=2\pi(0.5)$ rad/s. The bound $A+E\ge1$ applies to linear magnitudes.}
\label{fig:single_port_limit}
\end{figure}

Measuring data-center power $d$ does not directly reveal server demand $w$: voltage and frequency also affect the realized electrical load~\cite[Secs.~II-E and~III-D]{zhao2026datacenter}. Under the ideal balance $d=p+u$, subtracting the HESS contribution from either power measurement gives
\begin{equation}
 p-P_uu=d-(P_u+1)u=P_ww+P_gg.
 \label{eq:power_residual_redundancy}
\end{equation}
Thus the two measurements contain the same mixture of server and grid disturbances, not two independent source signals. For well-posed feedback $u=-K_pp-K_dd$, substituting $d=p+u$ into \eqref{eq:port_power} gives
\begin{equation}
 T_i=\frac{1+K_d}{1+K_d+(K_p+K_d)P_u}P_i,\quad i\in\{w,g\}.
 \label{eq:two_power_maps}
\end{equation}
Both source responses are multiplied by the same factor, so the isolation--support bound \eqref{eq:tradeoff} still applies: adding $d$ changes the feedback dynamics but does not resolve the disturbance origin. An additional measurement must instead provide a distinct source response. After exact actuation subtraction, \eqref{eq:port_measurement} gives $\bm{y}-\bm{h}_uu=\bm{H}[w,g]^{\mathsf T}$, where $\bm{H}=[\bm{h}_w\ \bm{h}_g]$. The following theorem gives the condition where the feedback measurements are enough:

\begin{theorem}[Pointwise source identifiability]
At a fixed frequency and operating point, server-demand and grid disturbances can be algebraically reconstructed from the exact residual if and only if
\begin{equation}
 \operatorname{rank}\bm{H}(j\omega)=2.
 \label{eq:rank_condition}
\end{equation}
\end{theorem}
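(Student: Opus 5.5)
The plan is to read the statement as a question about a linear map over $\mathbb{C}$. Fix $\omega$ and the operating point. The exact residual $\bm{r}=\bm{y}-\bm{h}_uu$ then equals $\bm{H}(j\omega)\bm{z}$, where $\bm{z}=[w,g]^{\mathsf T}\in\mathbb{C}^2$ collects the disturbance phasors and $\bm{H}(j\omega)\in\mathbb{C}^{3\times 2}$. I will interpret "algebraically reconstructed" to mean that there exists a linear (left-inverse) map $\bm{L}\in\mathbb{C}^{2\times 3}$ with $\bm{L}\bm{r}=\bm{z}$ for every $\bm{z}$. Equivalently, the map $\bm{z}\mapsto\bm{H}\bm{z}$ must be injective, so that $\bm{r}$ determines $\bm{z}$ uniquely. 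I will note at the outset that, for a linear map, these two readings coincide. The transfers are assumed finite at $j\omega$, as in the single-port limit theorem, so $\bm{H}(j\omega)$ is a well-defined complex matrix.

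\emph{Sufficiency.} If $\operatorname{rank}\bm{H}=2$, the columns $\bm{h}_w,\bm{h}_g$ are linearly independent. Hence $\bm{H}^{\mathsf H}\bm{H}$ is a nonsingular $2\times2$ Gram matrix. The left inverse $\bm{L}=(\bm{H}^{\mathsf H}\bm{H})^{-1}\bm{H}^{\mathsf H}$ then gives $\bm{L}\bm{r}=\bm{z}$ exactly. Alternatively, one can choose any two rows of $\bm{H}$ forming a nonsingular $2\times 2$ minor and invert that minor. This yields an explicit reconstruction from two of the three residual channels, which connects to the later measurement-selection discussion.

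\emph{Necessity.} If $\operatorname{rank}\bm{H}<2$, there is a nonzero $\bm{z}_0\in\ker\bm{H}$. Then $\bm{z}$ and $\bm{z}+\bm{z}_0$ produce identical residuals, so no map of $\bm{r}$ can recover $\bm{z}$ for all disturbances. To make this physically concrete, I will treat two cases. In the first, $\bm{h}_w$ and $\bm{h}_g$ are collinear and both nonzero, and a suitable combination of server and grid injections is invisible. In the second, one column vanishes, so that source is unobservable at that frequency. I will also remark that these rank-deficient cases are exactly the ones found earlier for power-only measurement. The power rows of $\bm{H}$ are $P_w$ and $P_g$ alone, and by \eqref{eq:power_residual_redundancy} adding $d$ contributes only the same row again, so rank $\le 1$. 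This is consistent with the single-port limit theorem and \eqref{eq:two_power_maps}.

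The main obstacle is not the linear algebra. It is justifying that the residual is truly $\bm{H}\bm{z}$ irrespective of the feedback. I will argue this from the port model \eqref{eq:port_measurement}: it holds for any $u$, and hence for $u=-\bm{K}^{\mathsf T}\bm{y}$, because closing the loop only makes $u$ a function of $(w,g)$. Subtracting $\bm{h}_uu$ with the known actuation therefore removes the controller's influence exactly. Consequently, identifiability depends only on $\bm{H}$ and not on $\bm{K}$. I will state this as a remark: the condition is controller-independent, and it is pointwise in $\omega$, with no claim about a stable causal reconstruction filter.
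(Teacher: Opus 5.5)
Your proof is correct and follows the paper's argument: the exact residual $\bm H\bm z$ determines $\bm z=[w,g]^{\mathsf T}$ exactly when $\bm z\mapsto\bm H\bm z$ is injective, which is equivalent to $\bm H(j\omega)$ having full column rank. The paper also specializes to $\bm y=[p,v,f]^{\mathsf T}$ by the column reduction $\bm h_g-(P_g/P_w)\bm h_w=[0,\bm G_g^{\mathsf T}]^{\mathsf T}$, which gives the explicit test $P_w\neq0$ and $\bm G_g\neq\bm 0$ that you omit; this reduction also shows that your ``collinear, both nonzero'' case cannot occur for the full $(p,v,f)$ set (there rank deficiency always means a vanishing column) and arises only for reduced or redundant sets such as $p$ alone or $(p,d)$.
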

\begin{proof}
The residual maps $[w,g]^{\mathsf T}$ to $\bm y-\bm h_uu$ injectively exactly when $\bm H$ has full column rank. For $\bm y=[p,v,f]^{\mathsf T}$, subtracting $(P_g/P_w)\bm h_w$ from $\bm h_g$ when $P_w\ne0$ leaves $[0,\bm G_g^{\mathsf T}]^{\mathsf T}$. Hence these measurements satisfy \eqref{eq:rank_condition} precisely when $P_w\ne0$ and $\bm G_g\ne\bm0$.
\end{proof}

\subsection{Cross-Band Trade-Off With Multiple Measurements}

Adding voltage or frequency measurements can distinguish server-demand and grid events when \eqref{eq:rank_condition} holds. However, distinguishing the sources does not allow their responses to be shaped arbitrarily over all frequencies. The measurements still drive the same HESS, whose power changes the PCC conditions and feeds back into those measurements. We now ask whether strong server-demand isolation in one band can be achieved without amplifying server-demand-induced power fluctuations elsewhere. The link to our objective follows directly from \eqref{eq:general_feedback_maps}. For the PCC measurements in \eqref{eq:physical_signatures}, server demand and HESS power affect the measurements through the same net-withdrawal channel, giving $\bm{h}_w=(P_w/P_u)\bm{h}_u$ where $P_u\ne0$. 

Define the HESS return-loop gain and sensitivity as
\begin{equation}
 L_u=\bm{K}^{\mathsf T}\bm{h}_u,\qquad S_u=\frac{1}{1+L_u}.
 \label{eq:return_sensitivity}
\end{equation}
Physically, $L_u$ follows a change in HESS power through the measured PCC quantities and back through the controller. Writing $\bm{K}_\xi=[K_v\ K_f]^{\mathsf T}$, substitution of the signatures into \eqref{eq:general_feedback_maps} gives
\begin{equation}
 \begin{aligned}
  L_u&=P_u(K_p+\bm{K}_\xi^{\mathsf T}\bm{G}_p),\\
  T_w&=P_wS_u,\qquad T_g=(P_g-P_u\bm{K}_\xi^{\mathsf T}\bm{G}_g)S_u.
 \end{aligned}
 \label{eq:two_sensor_maps}
\end{equation}
Thus $|S_u|=|T_w/P_w|$ is the controlled-to-uncontrolled PCC-power amplitude ratio for the same server-demand input, wherever $P_w\ne0$. Values below one indicate attenuation, and values above one indicate amplification. The following theorem quantifies the resulting cross-frequency constraint on server-demand isolation.

\begin{theorem}[Cross-band server-demand isolation limit]
For the PCC model in \eqref{eq:physical_loop}, suppose the interconnection is internally stable, $L_u$ is real-rational and strictly proper, and no unstable modes are hidden by cancellation. Exclude imaginary-axis poles of $L_u$. The sensitivity governing normalized server-demand leakage satisfies
\begin{equation}
 \int_0^\infty\ln|S_u(j\omega)|\,d\omega
 =\pi\sum_i\Re\lambda_i-\frac{\pi}{2}c_\infty,
 \label{eq:bode_general}
\end{equation}
where $\lambda_i$ are the right-half-plane poles of $L_u$, including any controller poles, and $c_\infty=\lim_{s\to\infty}sL_u(s)$.
\end{theorem}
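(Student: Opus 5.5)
The argument is the classical Bode sensitivity integral, specialised to the HESS return loop $L_u$ of \eqref{eq:return_sensitivity}. The first step is to confirm that $S_u=(1+L_u)^{-1}$ is the relevant closed-loop sensitivity. By \eqref{eq:two_sensor_maps}, $T_w=P_wS_u$. Internal stability, together with the assumption that no unstable modes are hidden by cancellation, then gives three facts. First, $S_u$ is analytic and bounded in the closed right half-plane $\Re s\ge 0$. Second, the zeros of $1+L_u$ all lie in the open left half-plane. Third, every right-half-plane pole of $L_u$, including unstable controller poles from $\bm{K}$ and unstable grid or data-center modes carried through $\bm{h}_u$, appears as a right-half-plane zero of $S_u$, with the same multiplicity.

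With this in place, factor out the unstable zeros using the Blaschke product $B(s)=\prod_i (s-\lambda_i)/(s+\bar\lambda_i)$. Conjugate pairs are included, so $B$ is real-rational. Define $\tilde S(s)=S_u(s)/B(s)$. Then:
\begin{itemize}
\item $\tilde S$ is analytic and nonvanishing in $\Re s\ge0$;
\item $|\tilde S(j\omega)|=|S_u(j\omega)|$ on the imaginary axis;
\item $\tilde S(\infty)=1$, since $L_u$ is strictly proper.
\end{itemize}
It follows that $\ln\tilde S$ is analytic in the closed right half-plane. Excluding imaginary-axis poles of $L_u$ ensures that $S_u$ has no zeros on $j\mathbb{R}$, so the integrand is finite and $\ln|S_u|$ is integrable there.

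Next, apply Cauchy's theorem to $\ln\tilde S(s)$ on a contour consisting of the segment $[-jR,jR]$ closed by a right semicircle of radius $R$. Because $L_u$ is strictly proper with $sL_u\to c_\infty$, we have
\[
\ln S_u(s)=-L_u(s)+O(|s|^{-2})=-\frac{c_\infty}{s}+O(|s|^{-2}).
\]
The semicircle integral therefore tends to $-j\pi c_\infty$. Taking real parts and using the evenness of $\ln|S_u(j\omega)|$ yields $\int_0^\infty\ln|\tilde S(j\omega)|\,d\omega=-\tfrac{\pi}{2}c_\infty$.

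Finally, account for the Blaschke factor. On the imaginary axis $|B(j\omega)|=1$, so $\ln|S_u|=\ln|\tilde S|$ there, but $\ln B$ contributes at infinity. Its expansion is
\[
\ln B(s)=-\frac{\sum_i(\lambda_i+\bar\lambda_i)}{s}+O(|s|^{-2}),
\]
and the same semicircle computation turns this $1/s$ coefficient into the term $+\pi\sum_i\Re\lambda_i$. An equivalent route is to apply the Poisson–Jensen formula to each Blaschke factor directly. Combining the two contributions gives \eqref{eq:bode_general}.

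The main obstacle is the justification in the first step. Specifically, one must show that the right-half-plane poles of the scalar reduced loop $L_u=P_u(K_p+\bm{K}_\xi^{\mathsf T}\bm{G}_p)$ coincide, with multiplicity, with the right-half-plane zeros of $S_u$. This is not automatic, because $P_u=-1/\Delta$ may itself contain unstable physical modes of the data-center–grid interconnection. My plan is to argue from internal stability of the full loop in \eqref{eq:physical_loop}: the map from an injected HESS perturbation to $u$ equals $S_u$ and must be stable. If an unstable pole of $L_u$ were not a zero of $1+L_u$, it would have to be cancelled inside $S_u$, which is exactly the hidden cancellation the hypotheses exclude.
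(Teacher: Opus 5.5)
Your proposal is correct and follows the paper's route: the paper likewise identifies $S_u=(1+L_u)^{-1}$ as the normalized leakage via \eqref{eq:two_sensor_maps} and then simply cites the classical Bode sensitivity integral, whose standard Blaschke-factor and contour proof you reproduce. Your ``main obstacle'' is in fact automatic, since the right-half-plane poles of the reduced scalar $L_u$ are exactly the right-half-plane zeros of $S_u$; internal stability is needed only to make $S_u$ itself stable, which your input-disturbance argument already supplies.

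One bookkeeping fix is needed. Cauchy's theorem is applied to $\ln\tilde S=\ln S_u-\ln B$, whose $1/s$ coefficient is $-c_\infty+2\sum_i\Re\lambda_i$. Your intermediate claim $\int_0^\infty\ln|\tilde S(j\omega)|\,d\omega=-\tfrac{\pi}{2}c_\infty$ is therefore false whenever $L_u$ has right-half-plane poles, because $|\tilde S|=|S_u|$ on $j\mathbb{R}$. A single semicircle computation with the full coefficient gives \eqref{eq:bode_general} directly.
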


\begin{proof}
Equation~\eqref{eq:two_sensor_maps} identifies normalized server-demand leakage with $S_u=(1+L_u)^{-1}$. Applying the scalar Bode sensitivity integral gives \eqref{eq:bode_general}~\cite{seron1997limits,doyle1992feedback}.
\end{proof}

For a loop with no unstable poles and relative degree at least two, \eqref{eq:bode_general} is zero: reducing server-induced PCC oscillations in one band requires amplification elsewhere. Deeper or wider attenuation, including suppression of several harmonics, requires more positive log-gain area outside the attenuated bands, but does not fix the location or height of individual peaks. Fig.~\ref{fig:bode_area} illustrates this waterbed effect. The PCC-power amplitude is reduced to 11.2\% of its uncontrolled value at 3 Hz, but amplified to 2.195 times that value near 47.71 Hz. Blue and red areas represent attenuation and amplification in \eqref{eq:bode_general}; their full areas balance, including the positive tail beyond 100 Hz. Thus an oscillation moving into the high-frequency peak can be worsened by the same controller that suppresses it at low frequency.

\begin{figure}[!t]
\centering
\begin{tikzpicture}[x=1mm,y=1mm,font=\scriptsize,>=Stealth,line width=0.55pt]
\path[use as bounding box] (0,2) rectangle (85,48);
\fill[gridblue!18] (14.000,29.333) -- plot coordinates {(14.000,29.333) (14.000,29.333) (14.000,29.333) (14.000,29.333) (14.000,29.333) (14.000,29.333) (14.000,29.333) (14.000,29.333) (14.000,29.333) (14.000,29.333) (14.000,29.333) (14.000,29.333) (14.000,29.333) (14.000,29.333) (14.000,29.333) (14.000,29.333) (14.000,29.333) (14.000,29.333) (14.000,29.332) (14.000,29.332) (14.000,29.332) (14.000,29.332) (14.000,29.332) (14.000,29.331) (14.000,29.331) (14.000,29.331) (14.000,29.330) (14.000,29.330) (14.001,29.329) (14.001,29.328) (14.001,29.328) (14.001,29.327) (14.001,29.326) (14.001,29.324) (14.001,29.323) (14.001,29.321) (14.001,29.319) (14.001,29.317) (14.001,29.314) (14.001,29.311) (14.001,29.307) (14.001,29.302) (14.002,29.297) (14.002,29.291) (14.002,29.284) (14.002,29.276) (14.002,29.267) (14.002,29.256) (14.002,29.243) (14.003,29.229) (14.003,29.212) (14.003,29.192) (14.003,29.169) (14.004,29.143) (14.004,29.112) (14.004,29.077) (14.005,29.036) (14.005,28.989) (14.005,28.935) (14.006,28.874) (14.006,28.803) (14.007,28.723) (14.007,28.632) (14.008,28.529) (14.008,28.413) (14.009,28.283) (14.010,28.138) (14.011,27.976) (14.011,27.797) (14.012,27.600) (14.013,27.385) (14.014,27.152) (14.016,26.899) (14.017,26.628) (14.018,26.339) (14.020,26.033) (14.021,25.710) (14.023,25.372) (14.025,25.019) (14.027,24.653) (14.029,24.277) (14.031,23.890) (14.034,23.496) (14.036,23.095) (14.039,22.689) (14.042,22.281) (14.046,21.872) (14.049,21.463) (14.053,21.056) (14.058,20.653) (14.062,20.256) (14.067,19.867) (14.073,19.486) (14.078,19.117) (14.085,18.759) (14.091,18.416) (14.099,18.087) (14.106,17.775) (14.115,17.479) (14.124,17.202) (14.134,16.943) (14.134,16.943) (14.236,15.590) (14.338,15.153) (14.441,14.965) (14.543,14.869) (14.645,14.814) (14.747,14.780) (14.849,14.758) (14.952,14.743) (15.054,14.732) (15.156,14.724) (15.258,14.719) (15.360,14.715) (15.463,14.713) (15.565,14.712) (15.667,14.711) (15.769,14.711) (15.871,14.712) (15.973,14.713) (16.076,14.715) (16.178,14.717) (16.280,14.720) (16.382,14.723) (16.484,14.727) (16.587,14.731) (16.689,14.736) (16.791,14.741) (16.893,14.746) (16.995,14.753) (17.098,14.759) (17.200,14.766) (17.302,14.774) (17.404,14.782) (17.506,14.791) (17.609,14.801) (17.711,14.811) (17.813,14.822) (17.915,14.833) (18.017,14.845) (18.120,14.858) (18.222,14.872) (18.324,14.887) (18.426,14.902) (18.528,14.919) (18.631,14.936) (18.733,14.954) (18.835,14.973) (18.937,14.994) (19.039,15.015) (19.141,15.037) (19.244,15.060) (19.346,15.085) (19.448,15.110) (19.550,15.137) (19.652,15.165) (19.755,15.194) (19.857,15.224) (19.959,15.255) (20.061,15.288) (20.163,15.322) (20.266,15.357) (20.368,15.394) (20.470,15.432) (20.572,15.471) (20.674,15.512) (20.777,15.554) (20.879,15.597) (20.981,15.642) (21.083,15.688) (21.185,15.736) (21.288,15.785) (21.390,15.835) (21.492,15.887) (21.594,15.940) (21.696,15.994) (21.799,16.050) (21.901,16.108) (22.003,16.166) (22.105,16.226) (22.207,16.287) (22.309,16.350) (22.412,16.414) (22.514,16.479) (22.616,16.546) (22.718,16.613) (22.820,16.682) (22.923,16.752) (23.025,16.824) (23.127,16.896) (23.229,16.970) (23.331,17.044) (23.434,17.120) (23.536,17.197) (23.638,17.275) (23.740,17.354) (23.842,17.433) (23.945,17.514) (24.047,17.596) (24.149,17.678) (24.251,17.762) (24.353,17.846) (24.456,17.931) (24.558,18.017) (24.660,18.104) (24.762,18.191) (24.864,18.279) (24.967,18.367) (25.069,18.457) (25.171,18.547) (25.273,18.637) (25.375,18.728) (25.477,18.820) (25.580,18.912) (25.682,19.004) (25.784,19.097) (25.886,19.191) (25.988,19.284) (26.091,19.379) (26.193,19.473) (26.295,19.568) (26.397,19.664) (26.499,19.759) (26.602,19.855) (26.704,19.952) (26.806,20.048) (26.908,20.145) (27.010,20.242) (27.113,20.339) (27.215,20.437) (27.317,20.534) (27.419,20.632) (27.521,20.730) (27.624,20.828) (27.726,20.927) (27.828,21.025) (27.930,21.123) (28.032,21.222) (28.135,21.321) (28.237,21.420) (28.339,21.519) (28.441,21.618) (28.543,21.717) (28.645,21.816) (28.748,21.915) (28.850,22.014) (28.952,22.113) (29.054,22.213) (29.156,22.312) (29.259,22.411) (29.361,22.510) (29.463,22.610) (29.565,22.709) (29.667,22.808) (29.770,22.908) (29.872,23.007) (29.974,23.106) (30.076,23.206) (30.178,23.305) (30.281,23.404) (30.383,23.503) (30.485,23.602) (30.587,23.702) (30.689,23.801) (30.792,23.900) (30.894,23.999) (30.996,24.098) (31.098,24.197) (31.200,24.296) (31.303,24.394) (31.405,24.493) (31.507,24.592) (31.609,24.691) (31.711,24.789) (31.813,24.888) (31.916,24.986) (32.018,25.085) (32.120,25.183) (32.222,25.282) (32.324,25.380) (32.427,25.478) (32.529,25.576) (32.631,25.675) (32.733,25.773) (32.835,25.871) (32.938,25.968) (33.040,26.066) (33.142,26.164) (33.244,26.262) (33.346,26.359) (33.449,26.457) (33.551,26.554) (33.653,26.652) (33.755,26.749) (33.857,26.846) (33.960,26.944) (34.062,27.041) (34.164,27.138) (34.266,27.234) (34.368,27.331) (34.471,27.428) (34.573,27.524) (34.675,27.621) (34.777,27.717) (34.879,27.814) (34.981,27.910) (35.084,28.006) (35.186,28.102) (35.288,28.197) (35.390,28.293) (35.492,28.388) (35.595,28.484) (35.697,28.579) (35.799,28.674) (35.901,28.769) (36.003,28.863) (36.106,28.958) (36.208,29.052) (36.310,29.146) (36.412,29.240) (36.514,29.333)} -- (36.514,29.333) -- cycle;
\fill[tradeoffred!18] (36.514,29.333) -- plot coordinates {(36.514,29.333) (36.701,29.503) (36.887,29.672) (37.073,29.841) (37.259,30.008) (37.445,30.174) (37.631,30.339) (37.817,30.503) (38.003,30.665) (38.190,30.826) (38.376,30.986) (38.562,31.144) (38.748,31.300) (38.934,31.454) (39.120,31.606) (39.306,31.756) (39.493,31.903) (39.679,32.048) (39.865,32.190) (40.051,32.330) (40.237,32.467) (40.423,32.600) (40.609,32.730) (40.795,32.857) (40.982,32.979) (41.168,33.098) (41.354,33.214) (41.540,33.325) (41.726,33.431) (41.912,33.533) (42.098,33.631) (42.284,33.724) (42.471,33.812) (42.657,33.896) (42.843,33.974) (43.029,34.048) (43.215,34.116) (43.401,34.180) (43.587,34.238) (43.774,34.291) (43.960,34.340) (44.146,34.383) (44.332,34.422) (44.518,34.456) (44.704,34.485) (44.890,34.510) (45.076,34.531) (45.263,34.547) (45.449,34.559) (45.635,34.567) (45.821,34.572) (46.007,34.573) (46.193,34.571) (46.379,34.565) (46.566,34.556) (46.752,34.545) (46.938,34.530) (47.124,34.514) (47.310,34.495) (47.496,34.474) (47.682,34.450) (47.868,34.425) (48.055,34.399) (48.241,34.370) (48.427,34.340) (48.613,34.309) (48.799,34.277) (48.985,34.244) (49.171,34.210) (49.358,34.175) (49.544,34.139) (49.730,34.103) (49.916,34.066) (50.102,34.028) (50.288,33.990) (50.474,33.952) (50.660,33.914) (50.847,33.875) (51.033,33.836) (51.219,33.797) (51.405,33.758) (51.591,33.719) (51.777,33.680) (51.963,33.641) (52.150,33.602) (52.336,33.564) (52.522,33.525) (52.708,33.487) (52.894,33.449) (53.080,33.411) (53.266,33.373) (53.452,33.335) (53.639,33.298) (53.825,33.261) (54.011,33.225) (54.197,33.188) (54.383,33.152) (54.569,33.117) (54.755,33.081) (54.941,33.046) (55.128,33.012) (55.314,32.977) (55.500,32.943) (55.686,32.910) (55.872,32.877) (56.058,32.844) (56.244,32.811) (56.431,32.779) (56.617,32.747) (56.803,32.716) (56.989,32.685) (57.175,32.654) (57.361,32.623) (57.547,32.593) (57.733,32.564) (57.920,32.534) (58.106,32.505) (58.292,32.477) (58.478,32.448) (58.664,32.420) (58.850,32.393) (59.036,32.366) (59.223,32.339) (59.409,32.312) (59.595,32.286) (59.781,32.260) (59.967,32.234) (60.153,32.209) (60.339,32.183) (60.525,32.159) (60.712,32.134) (60.898,32.110) (61.084,32.086) (61.270,32.063) (61.456,32.039) (61.642,32.016) (61.828,31.994) (62.015,31.971) (62.201,31.949) (62.387,31.927) (62.573,31.905) (62.759,31.884) (62.945,31.863) (63.131,31.842) (63.317,31.821) (63.504,31.801) (63.690,31.781) (63.876,31.761) (64.062,31.741) (64.248,31.722) (64.434,31.702) (64.620,31.684) (64.806,31.665) (64.993,31.646) (65.179,31.628) (65.365,31.610) (65.551,31.592) (65.737,31.574) (65.923,31.557) (66.109,31.539) (66.296,31.522) (66.482,31.505) (66.668,31.489) (66.854,31.472) (67.040,31.456) (67.226,31.440) (67.412,31.424) (67.598,31.408) (67.785,31.392) (67.971,31.377) (68.157,31.362) (68.343,31.347) (68.529,31.332) (68.715,31.317) (68.901,31.302) (69.088,31.288) (69.274,31.274) (69.460,31.260) (69.646,31.246) (69.832,31.232) (70.018,31.218) (70.204,31.205) (70.390,31.191) (70.577,31.178) (70.763,31.165) (70.949,31.152) (71.135,31.139) (71.321,31.127) (71.507,31.114) (71.693,31.102) (71.880,31.089) (72.066,31.077) (72.252,31.065) (72.438,31.053) (72.624,31.042) (72.810,31.030) (72.996,31.018) (73.182,31.007) (73.369,30.996) (73.555,30.985) (73.741,30.973) (73.927,30.962) (74.113,30.952) (74.299,30.941) (74.485,30.930) (74.672,30.920) (74.858,30.909) (75.044,30.899) (75.230,30.889) (75.416,30.879) (75.602,30.869) (75.788,30.859) (75.974,30.849) (76.161,30.839) (76.347,30.830) (76.533,30.820) (76.719,30.811) (76.905,30.801) (77.091,30.792) (77.277,30.783) (77.463,30.774) (77.650,30.765) (77.836,30.756) (78.022,30.747) (78.208,30.738) (78.394,30.730) (78.580,30.721) (78.766,30.712) (78.953,30.704) (79.139,30.696) (79.325,30.687) (79.511,30.679) (79.697,30.671) (79.883,30.663) (80.069,30.655) (80.255,30.647) (80.442,30.639) (80.628,30.632) (80.814,30.624) (81.000,30.616)} -- (81.000,29.333) -- cycle;
\foreach \x/\t in {14/0,27.4/20,40.8/40,54.2/60,67.6/80,81/100} {
  \draw[gray!20] (\x,12) -- (\x,38);
  \draw (\x,12) -- (\x,11.2);
  \node[below=0.5mm] at (\x,12) {$\t$};
}
\foreach \y/\t in {16/-2,22.667/-1,29.333/0,36/1} {
  \draw[gray!25] (14,\y) -- (81,\y);
  \node[left=0.7mm] at (14,\y) {$\t$};
}
\draw[supportgreen,densely dotted] (14.67,12) -- (14.67,39);
\draw[supportgreen,densely dotted] (27.4,12) -- (27.4,39);
\draw[supportgreen,{Stealth[length=1.2mm]}-{Stealth[length=1.2mm]}] (14.67,40) -- (27.4,40);
\node[text=supportgreen,align=center,anchor=south] at (26,40.8) {1--20 Hz isolation band};
\draw[gray,dashed] (14,29.333) -- (83,29.333);
\draw[->] (14,12) -- (83,12);
\draw[->] (14,12) -- (14,39);
\draw[gridblue,line width=1pt] plot coordinates {(14.000,29.333) (14.000,29.333) (14.000,29.333) (14.000,29.333) (14.000,29.333) (14.000,29.333) (14.000,29.333) (14.000,29.333) (14.000,29.333) (14.000,29.333) (14.000,29.333) (14.000,29.333) (14.000,29.333) (14.000,29.333) (14.000,29.333) (14.000,29.333) (14.000,29.333) (14.000,29.333) (14.000,29.332) (14.000,29.332) (14.000,29.332) (14.000,29.332) (14.000,29.332) (14.000,29.331) (14.000,29.331) (14.000,29.331) (14.000,29.330) (14.000,29.330) (14.001,29.329) (14.001,29.328) (14.001,29.328) (14.001,29.327) (14.001,29.326) (14.001,29.324) (14.001,29.323) (14.001,29.321) (14.001,29.319) (14.001,29.317) (14.001,29.314) (14.001,29.311) (14.001,29.307) (14.001,29.302) (14.002,29.297) (14.002,29.291) (14.002,29.284) (14.002,29.276) (14.002,29.267) (14.002,29.256) (14.002,29.243) (14.003,29.229) (14.003,29.212) (14.003,29.192) (14.003,29.169) (14.004,29.143) (14.004,29.112) (14.004,29.077) (14.005,29.036) (14.005,28.989) (14.005,28.935) (14.006,28.874) (14.006,28.803) (14.007,28.723) (14.007,28.632) (14.008,28.529) (14.008,28.413) (14.009,28.283) (14.010,28.138) (14.011,27.976) (14.011,27.797) (14.012,27.600) (14.013,27.385) (14.014,27.152) (14.016,26.899) (14.017,26.628) (14.018,26.339) (14.020,26.033) (14.021,25.710) (14.023,25.372) (14.025,25.019) (14.027,24.653) (14.029,24.277) (14.031,23.890) (14.034,23.496) (14.036,23.095) (14.039,22.689) (14.042,22.281) (14.046,21.872) (14.049,21.463) (14.053,21.056) (14.058,20.653) (14.062,20.256) (14.067,19.867) (14.073,19.486) (14.078,19.117) (14.085,18.759) (14.091,18.416) (14.099,18.087) (14.106,17.775) (14.115,17.479) (14.124,17.202) (14.134,16.943) (14.134,16.943) (14.236,15.590) (14.338,15.153) (14.441,14.965) (14.543,14.869) (14.645,14.814) (14.747,14.780) (14.849,14.758) (14.952,14.743) (15.054,14.732) (15.156,14.724) (15.258,14.719) (15.360,14.715) (15.463,14.713) (15.565,14.712) (15.667,14.711) (15.769,14.711) (15.871,14.712) (15.973,14.713) (16.076,14.715) (16.178,14.717) (16.280,14.720) (16.382,14.723) (16.484,14.727) (16.587,14.731) (16.689,14.736) (16.791,14.741) (16.893,14.746) (16.995,14.753) (17.098,14.759) (17.200,14.766) (17.302,14.774) (17.404,14.782) (17.506,14.791) (17.609,14.801) (17.711,14.811) (17.813,14.822) (17.915,14.833) (18.017,14.845) (18.120,14.858) (18.222,14.872) (18.324,14.887) (18.426,14.902) (18.528,14.919) (18.631,14.936) (18.733,14.954) (18.835,14.973) (18.937,14.994) (19.039,15.015) (19.141,15.037) (19.244,15.060) (19.346,15.085) (19.448,15.110) (19.550,15.137) (19.652,15.165) (19.755,15.194) (19.857,15.224) (19.959,15.255) (20.061,15.288) (20.163,15.322) (20.266,15.357) (20.368,15.394) (20.470,15.432) (20.572,15.471) (20.674,15.512) (20.777,15.554) (20.879,15.597) (20.981,15.642) (21.083,15.688) (21.185,15.736) (21.288,15.785) (21.390,15.835) (21.492,15.887) (21.594,15.940) (21.696,15.994) (21.799,16.050) (21.901,16.108) (22.003,16.166) (22.105,16.226) (22.207,16.287) (22.309,16.350) (22.412,16.414) (22.514,16.479) (22.616,16.546) (22.718,16.613) (22.820,16.682) (22.923,16.752) (23.025,16.824) (23.127,16.896) (23.229,16.970) (23.331,17.044) (23.434,17.120) (23.536,17.197) (23.638,17.275) (23.740,17.354) (23.842,17.433) (23.945,17.514) (24.047,17.596) (24.149,17.678) (24.251,17.762) (24.353,17.846) (24.456,17.931) (24.558,18.017) (24.660,18.104) (24.762,18.191) (24.864,18.279) (24.967,18.367) (25.069,18.457) (25.171,18.547) (25.273,18.637) (25.375,18.728) (25.477,18.820) (25.580,18.912) (25.682,19.004) (25.784,19.097) (25.886,19.191) (25.988,19.284) (26.091,19.379) (26.193,19.473) (26.295,19.568) (26.397,19.664) (26.499,19.759) (26.602,19.855) (26.704,19.952) (26.806,20.048) (26.908,20.145) (27.010,20.242) (27.113,20.339) (27.215,20.437) (27.317,20.534) (27.419,20.632) (27.521,20.730) (27.624,20.828) (27.726,20.927) (27.828,21.025) (27.930,21.123) (28.032,21.222) (28.135,21.321) (28.237,21.420) (28.339,21.519) (28.441,21.618) (28.543,21.717) (28.645,21.816) (28.748,21.915) (28.850,22.014) (28.952,22.113) (29.054,22.213) (29.156,22.312) (29.259,22.411) (29.361,22.510) (29.463,22.610) (29.565,22.709) (29.667,22.808) (29.770,22.908) (29.872,23.007) (29.974,23.106) (30.076,23.206) (30.178,23.305) (30.281,23.404) (30.383,23.503) (30.485,23.602) (30.587,23.702) (30.689,23.801) (30.792,23.900) (30.894,23.999) (30.996,24.098) (31.098,24.197) (31.200,24.296) (31.303,24.394) (31.405,24.493) (31.507,24.592) (31.609,24.691) (31.711,24.789) (31.813,24.888) (31.916,24.986) (32.018,25.085) (32.120,25.183) (32.222,25.282) (32.324,25.380) (32.427,25.478) (32.529,25.576) (32.631,25.675) (32.733,25.773) (32.835,25.871) (32.938,25.968) (33.040,26.066) (33.142,26.164) (33.244,26.262) (33.346,26.359) (33.449,26.457) (33.551,26.554) (33.653,26.652) (33.755,26.749) (33.857,26.846) (33.960,26.944) (34.062,27.041) (34.164,27.138) (34.266,27.234) (34.368,27.331) (34.471,27.428) (34.573,27.524) (34.675,27.621) (34.777,27.717) (34.879,27.814) (34.981,27.910) (35.084,28.006) (35.186,28.102) (35.288,28.197) (35.390,28.293) (35.492,28.388) (35.595,28.484) (35.697,28.579) (35.799,28.674) (35.901,28.769) (36.003,28.863) (36.106,28.958) (36.208,29.052) (36.310,29.146) (36.412,29.240) (36.514,29.333)};
\draw[tradeoffred,line width=1pt] plot coordinates {(36.514,29.333) (36.701,29.503) (36.887,29.672) (37.073,29.841) (37.259,30.008) (37.445,30.174) (37.631,30.339) (37.817,30.503) (38.003,30.665) (38.190,30.826) (38.376,30.986) (38.562,31.144) (38.748,31.300) (38.934,31.454) (39.120,31.606) (39.306,31.756) (39.493,31.903) (39.679,32.048) (39.865,32.190) (40.051,32.330) (40.237,32.467) (40.423,32.600) (40.609,32.730) (40.795,32.857) (40.982,32.979) (41.168,33.098) (41.354,33.214) (41.540,33.325) (41.726,33.431) (41.912,33.533) (42.098,33.631) (42.284,33.724) (42.471,33.812) (42.657,33.896) (42.843,33.974) (43.029,34.048) (43.215,34.116) (43.401,34.180) (43.587,34.238) (43.774,34.291) (43.960,34.340) (44.146,34.383) (44.332,34.422) (44.518,34.456) (44.704,34.485) (44.890,34.510) (45.076,34.531) (45.263,34.547) (45.449,34.559) (45.635,34.567) (45.821,34.572) (46.007,34.573) (46.193,34.571) (46.379,34.565) (46.566,34.556) (46.752,34.545) (46.938,34.530) (47.124,34.514) (47.310,34.495) (47.496,34.474) (47.682,34.450) (47.868,34.425) (48.055,34.399) (48.241,34.370) (48.427,34.340) (48.613,34.309) (48.799,34.277) (48.985,34.244) (49.171,34.210) (49.358,34.175) (49.544,34.139) (49.730,34.103) (49.916,34.066) (50.102,34.028) (50.288,33.990) (50.474,33.952) (50.660,33.914) (50.847,33.875) (51.033,33.836) (51.219,33.797) (51.405,33.758) (51.591,33.719) (51.777,33.680) (51.963,33.641) (52.150,33.602) (52.336,33.564) (52.522,33.525) (52.708,33.487) (52.894,33.449) (53.080,33.411) (53.266,33.373) (53.452,33.335) (53.639,33.298) (53.825,33.261) (54.011,33.225) (54.197,33.188) (54.383,33.152) (54.569,33.117) (54.755,33.081) (54.941,33.046) (55.128,33.012) (55.314,32.977) (55.500,32.943) (55.686,32.910) (55.872,32.877) (56.058,32.844) (56.244,32.811) (56.431,32.779) (56.617,32.747) (56.803,32.716) (56.989,32.685) (57.175,32.654) (57.361,32.623) (57.547,32.593) (57.733,32.564) (57.920,32.534) (58.106,32.505) (58.292,32.477) (58.478,32.448) (58.664,32.420) (58.850,32.393) (59.036,32.366) (59.223,32.339) (59.409,32.312) (59.595,32.286) (59.781,32.260) (59.967,32.234) (60.153,32.209) (60.339,32.183) (60.525,32.159) (60.712,32.134) (60.898,32.110) (61.084,32.086) (61.270,32.063) (61.456,32.039) (61.642,32.016) (61.828,31.994) (62.015,31.971) (62.201,31.949) (62.387,31.927) (62.573,31.905) (62.759,31.884) (62.945,31.863) (63.131,31.842) (63.317,31.821) (63.504,31.801) (63.690,31.781) (63.876,31.761) (64.062,31.741) (64.248,31.722) (64.434,31.702) (64.620,31.684) (64.806,31.665) (64.993,31.646) (65.179,31.628) (65.365,31.610) (65.551,31.592) (65.737,31.574) (65.923,31.557) (66.109,31.539) (66.296,31.522) (66.482,31.505) (66.668,31.489) (66.854,31.472) (67.040,31.456) (67.226,31.440) (67.412,31.424) (67.598,31.408) (67.785,31.392) (67.971,31.377) (68.157,31.362) (68.343,31.347) (68.529,31.332) (68.715,31.317) (68.901,31.302) (69.088,31.288) (69.274,31.274) (69.460,31.260) (69.646,31.246) (69.832,31.232) (70.018,31.218) (70.204,31.205) (70.390,31.191) (70.577,31.178) (70.763,31.165) (70.949,31.152) (71.135,31.139) (71.321,31.127) (71.507,31.114) (71.693,31.102) (71.880,31.089) (72.066,31.077) (72.252,31.065) (72.438,31.053) (72.624,31.042) (72.810,31.030) (72.996,31.018) (73.182,31.007) (73.369,30.996) (73.555,30.985) (73.741,30.973) (73.927,30.962) (74.113,30.952) (74.299,30.941) (74.485,30.930) (74.672,30.920) (74.858,30.909) (75.044,30.899) (75.230,30.889) (75.416,30.879) (75.602,30.869) (75.788,30.859) (75.974,30.849) (76.161,30.839) (76.347,30.830) (76.533,30.820) (76.719,30.811) (76.905,30.801) (77.091,30.792) (77.277,30.783) (77.463,30.774) (77.650,30.765) (77.836,30.756) (78.022,30.747) (78.208,30.738) (78.394,30.730) (78.580,30.721) (78.766,30.712) (78.953,30.704) (79.139,30.696) (79.325,30.687) (79.511,30.679) (79.697,30.671) (79.883,30.663) (80.069,30.655) (80.255,30.647) (80.442,30.639) (80.628,30.632) (80.814,30.624) (81.000,30.616)};
\fill[tradeoffred] (45.966,34.573) circle (0.45mm);
\node[text=tradeoffred,align=center,fill=white,inner sep=0.3mm] at (64,35.8) {$|S_u|=2.195$\\47.71 Hz};
\draw[tradeoffred,-{Stealth[length=1.2mm]}] (54.5,35.5) -- (46.766,34.573);
\node[text=gridblue,align=center,fill=white,inner sep=0.3mm] at (40,22.5) {attenuation\\$\ln|S_u|<0$};
\draw[gridblue,-{Stealth[length=1.2mm]}] (31.5,23.5) -- (24,25);
\draw[tradeoffred,densely dotted,->] (81.000,30.616) -- (84.000,30.514);
\node[text=tradeoffred,align=center,anchor=east,fill=white,inner sep=0.3mm] at (79,24.4) {amplification\\$\ln|S_u|>0$};
\draw[black!60,-{Stealth[length=1.2mm]}] (79,26) -- (82.700,30.351);
\node[rotate=90] at (3,25) {$\ln|S_u(j\omega)|$};
\node at (47,4) {Frequency (Hz)};
\end{tikzpicture}
\caption{Waterbed effect with $S_u=1/(1+F)$, $F(s)=\frac{8s}{s+\omega_h}\frac{\omega_b^2}{s^2+1.4\omega_bs+\omega_b^2}$, $\omega_h=2\pi(0.2)$, and $\omega_b=2\pi(15)$ rad/s. Blue: attenuation; red: amplification. Their full log-gain areas balance, including the tail beyond 100 Hz. The linear Hz axis scales both areas by the same factor relative to \eqref{eq:bode_general}.}
\label{fig:bode_area}
\end{figure}

\subsection{Cross-Channel Coupling and Grid-Response Shaping}
\label{sec:cross_channel}

The preceding subsection quantified the cross-frequency limitation within the server-demand channel $T_w$. Its attenuation and out-of-band amplification can also affect the grid-event channel $T_g$, because the same $S_u$ multiplies both responses in \eqref{eq:two_sensor_maps}. Tracking a prescribed target $T_g^\star$ therefore requires the grid-channel numerator to compensate the shaping imposed by $S_u$, including phase. We first describe this nominal matching and then examine how model or operating-point changes disturb it.

Throughout the following analysis, superscript $0$ denotes nominal-model quantities; transfers without this superscript describe the actual plant and its closed loop. Suppose a fixed controller $\bm{K}$ achieves the prescribed target on the nominal model over the frequencies of interest. By \eqref{eq:two_sensor_maps}, exact nominal tracking requires
\begin{equation}
 P_g^0-P_u^0\bm{K}_\xi^{\mathsf T}\bm{G}_g^0
 =(1+L_u^0)T_g^\star.
 \label{eq:nominal_grid_match}
\end{equation}
Thus, exact matching can prevent the isolation sensitivity from distorting the desired grid response on the nominal model.

Let $\delta X=X-X^0$ denote the difference between an actual transfer and its nominal counterpart. Keeping the controller and target fixed, substitution of \eqref{eq:nominal_grid_match} into \eqref{eq:two_sensor_maps} gives
\begin{equation}
 \begin{aligned}
 T_g-T_g^\star=S_u\big[\delta P_g-T_g^\star\delta L_u -\delta(P_u\bm{K}_\xi^{\mathsf T}\bm{G}_g)\big].
 \end{aligned}
 \label{eq:grid_tracking_mismatch}
\end{equation}

The bracketed term is the matching residual created by model or operating-point variation, and $S_u$ is evaluated for the actual closed loop. A sensitivity peak outside the isolation band magnifies a nonzero residual at the same frequency. Thus, nominal grid-response matching can become sensitive to model error in the frequency range where isolation incurs its waterbed cost.

Fig.~\ref{fig:cross_band_limit} illustrates this mismatch-amplification mechanism using the scalar model specified in its caption, with $T_g^\star=P_g$ as the illustrative target. Panel (a) shows similar isolation and out-of-band peaks with exact and mismatched load compensation, whereas panel (b) shows grid-response distortion when the nominal compensation parameter is $D_v^0=1.3D_v$. This is the sensitivity-weighted matching error described by \eqref{eq:grid_tracking_mismatch}. The impact on grid voltage and frequency follows from $\bm\xi=(\bm G_pT_g+\bm G_g)g$ in \eqref{eq:physical_loop}, using both magnitude and phase. When server-demand isolation is required at low frequencies and grid-response accuracy at higher frequencies, these requirements can conflict. If an out-of-band sensitivity peak lies in the required grid-response band, it amplifies the matching residual precisely where grid-support accuracy is needed. Controller design should therefore account for both frequency bands and model variations across operating points.

\begin{figure}[!t]
\centering
\begin{tikzpicture}[x=1mm,y=1mm,font=\scriptsize,>=Stealth,line width=0.55pt]
\path[use as bounding box] (0,-2) rectangle (85,65);
\fill[lightgreen] (33.269,40) rectangle (44.870,60);
\fill[lightgreen] (50.670,40) rectangle (58.339,60);
\fill[lightgreen] (33.269,12) rectangle (44.870,32);
\fill[lightgreen] (50.670,12) rectangle (58.339,32);
\foreach \x/\t in {14.000/0.1,33.269/1,52.537/10,71.806/100} {
  \draw[gray!20] (\x,40) -- (\x,60);
  \draw[gray!20] (\x,12) -- (\x,32);
  \draw (\x,40) -- (\x,39.2);
  \draw (\x,12) -- (\x,11.2);
  \node[below=0.5mm] at (\x,12) {$\t$};
}
\foreach \y/\t in {40/0,48/1,56/2} {
  \draw[gray!25] (14,\y) -- (81,\y);
  \node[left=0.7mm] at (14,\y) {$\t$};
}
\foreach \y/\t in {14.5/0.8,19.5/1.0,24.5/1.2,29.5/1.4} {
  \draw[gray!25] (14,\y) -- (81,\y);
  \node[left=0.7mm] at (14,\y) {$\t$};
}
\draw[gray,dashed] (14,48) -- (81,48);
\draw[gray,dashed] (14,19.5) -- (81,19.5);
\draw[->] (14,40) -- (83,40);
\draw[->] (14,40) -- (14,61);
\draw[->] (14,12) -- (83,12);
\draw[->] (14,12) -- (14,33);
\node[rotate=90,align=center] at (2,50) {Server-demand leakage\\$|T_w/P_w|$};
\node[rotate=90,align=center] at (2,22) {Grid-response ratio\\$|T_g/P_g|$};
\node[text=supportgreen] at (39,58) {$B_1$};
\node[text=supportgreen] at (55,58) {$B_2$};
\begin{scope}
\clip (14,40) rectangle (81,60);
\draw[gridblue,line width=1pt] plot coordinates {(14.000,41.937) (14.335,41.880) (14.670,41.825) (15.005,41.773) (15.340,41.723) (15.675,41.675) (16.010,41.629) (16.345,41.586) (16.680,41.544) (17.015,41.505) (17.350,41.468) (17.685,41.432) (18.020,41.398) (18.355,41.366) (18.690,41.336) (19.025,41.307) (19.360,41.280) (19.695,41.255) (20.030,41.231) (20.365,41.208) (20.700,41.187) (21.035,41.167) (21.370,41.148) (21.705,41.130) (22.040,41.113) (22.375,41.098) (22.710,41.083) (23.045,41.069) (23.380,41.056) (23.715,41.044) (24.050,41.033) (24.385,41.023) (24.720,41.013) (25.055,41.004) (25.390,40.996) (25.725,40.988) (26.060,40.980) (26.395,40.974) (26.730,40.967) (27.065,40.961) (27.400,40.956) (27.735,40.951) (28.070,40.946) (28.405,40.942) (28.740,40.938) (29.075,40.934) (29.410,40.930) (29.745,40.927) (30.080,40.924) (30.415,40.921) (30.750,40.919) (31.085,40.916) (31.420,40.914) (31.755,40.912) (32.090,40.910) (32.425,40.909) (32.760,40.907) (33.095,40.906) (33.430,40.904) (33.765,40.903) (34.100,40.902) (34.435,40.901) (34.770,40.900) (35.105,40.899) (35.440,40.898) (35.775,40.897) (36.110,40.897) (36.445,40.896) (36.780,40.896) (37.115,40.895) (37.450,40.895) (37.785,40.894) (38.120,40.894) (38.455,40.893) (38.790,40.893) (39.125,40.893) (39.460,40.893) (39.795,40.893) (40.130,40.892) (40.465,40.892) (40.800,40.892) (41.135,40.892) (41.470,40.892) (41.805,40.892) (42.140,40.893) (42.475,40.893) (42.810,40.893) (43.145,40.893) (43.480,40.894) (43.815,40.894) (44.150,40.894) (44.485,40.895) (44.820,40.896) (45.155,40.896) (45.490,40.897) (45.825,40.898) (46.160,40.899) (46.495,40.900) (46.830,40.902) (47.165,40.903) (47.500,40.905) (47.835,40.907) (48.170,40.910) (48.505,40.913) (48.840,40.916) (49.175,40.920) (49.510,40.924) (49.845,40.929) (50.180,40.934) (50.515,40.941) (50.850,40.948) (51.185,40.957) (51.520,40.967) (51.855,40.979) (52.190,40.992) (52.525,41.007) (52.860,41.025) (53.195,41.045) (53.530,41.069) (53.865,41.096) (54.200,41.127) (54.535,41.162) (54.870,41.203) (55.205,41.250) (55.540,41.304) (55.875,41.365) (56.210,41.435) (56.545,41.514) (56.880,41.605) (57.215,41.707) (57.550,41.824) (57.885,41.956) (58.220,42.106) (58.555,42.275) (58.890,42.468) (59.225,42.687) (59.560,42.937) (59.895,43.221) (60.230,43.545) (60.565,43.916) (60.900,44.342) (61.235,44.833) (61.570,45.399) (61.905,46.055) (62.240,46.815) (62.575,47.695) (62.910,48.712) (63.245,49.874) (63.580,51.180) (63.915,52.599) (64.250,54.061) (64.585,55.439) (64.920,56.567) (65.255,57.296) (65.590,57.555) (65.925,57.387) (66.260,56.910) (66.595,56.256) (66.930,55.533) (67.265,54.811) (67.600,54.127) (67.935,53.500) (68.270,52.934) (68.605,52.429) (68.940,51.980) (69.275,51.582) (69.610,51.229) (69.945,50.915) (70.280,50.636) (70.615,50.388) (70.950,50.166) (71.285,49.967) (71.620,49.789) (71.955,49.629) (72.290,49.485) (72.625,49.355) (72.960,49.237) (73.295,49.131) (73.630,49.035) (73.965,48.947) (74.300,48.868) (74.635,48.795) (74.970,48.729) (75.305,48.669) (75.640,48.614) (75.975,48.564) (76.310,48.518) (76.645,48.476) (76.980,48.438) (77.315,48.403) (77.650,48.371) (77.985,48.341) (78.320,48.314) (78.655,48.289) (78.990,48.266) (79.325,48.245) (79.660,48.226) (79.995,48.208) (80.330,48.192) (80.665,48.177) (81.000,48.163)};
\draw[tradeoffred,densely dashed,line width=0.9pt] plot coordinates {(14.000,41.891) (14.335,41.835) (14.670,41.782) (15.005,41.730) (15.340,41.681) (15.675,41.635) (16.010,41.590) (16.345,41.548) (16.680,41.507) (17.015,41.469) (17.350,41.432) (17.685,41.397) (18.020,41.364) (18.355,41.333) (18.690,41.304) (19.025,41.276) (19.360,41.249) (19.695,41.224) (20.030,41.201) (20.365,41.179) (20.700,41.158) (21.035,41.138) (21.370,41.120) (21.705,41.102) (22.040,41.086) (22.375,41.071) (22.710,41.056) (23.045,41.043) (23.380,41.030) (23.715,41.019) (24.050,41.008) (24.385,40.998) (24.720,40.988) (25.055,40.979) (25.390,40.971) (25.725,40.963) (26.060,40.956) (26.395,40.950) (26.730,40.943) (27.065,40.938) (27.400,40.932) (27.735,40.927) (28.070,40.923) (28.405,40.919) (28.740,40.915) (29.075,40.911) (29.410,40.908) (29.745,40.904) (30.080,40.902) (30.415,40.899) (30.750,40.896) (31.085,40.894) (31.420,40.892) (31.755,40.890) (32.090,40.888) (32.425,40.887) (32.760,40.885) (33.095,40.884) (33.430,40.882) (33.765,40.881) (34.100,40.880) (34.435,40.879) (34.770,40.878) (35.105,40.877) (35.440,40.877) (35.775,40.876) (36.110,40.875) (36.445,40.875) (36.780,40.874) (37.115,40.874) (37.450,40.874) (37.785,40.873) (38.120,40.873) (38.455,40.873) (38.790,40.873) (39.125,40.873) (39.460,40.873) (39.795,40.872) (40.130,40.873) (40.465,40.873) (40.800,40.873) (41.135,40.873) (41.470,40.873) (41.805,40.873) (42.140,40.874) (42.475,40.874) (42.810,40.875) (43.145,40.875) (43.480,40.876) (43.815,40.876) (44.150,40.877) (44.485,40.878) (44.820,40.879) (45.155,40.880) (45.490,40.881) (45.825,40.882) (46.160,40.884) (46.495,40.885) (46.830,40.887) (47.165,40.889) (47.500,40.892) (47.835,40.894) (48.170,40.897) (48.505,40.900) (48.840,40.904) (49.175,40.908) (49.510,40.913) (49.845,40.918) (50.180,40.924) (50.515,40.931) (50.850,40.939) (51.185,40.948) (51.520,40.958) (51.855,40.970) (52.190,40.984) (52.525,41.000) (52.860,41.018) (53.195,41.038) (53.530,41.062) (53.865,41.090) (54.200,41.121) (54.535,41.157) (54.870,41.198) (55.205,41.245) (55.540,41.299) (55.875,41.361) (56.210,41.431) (56.545,41.510) (56.880,41.601) (57.215,41.704) (57.550,41.821) (57.885,41.953) (58.220,42.103) (58.555,42.274) (58.890,42.467) (59.225,42.687) (59.560,42.937) (59.895,43.222) (60.230,43.548) (60.565,43.920) (60.900,44.349) (61.235,44.842) (61.570,45.412) (61.905,46.072) (62.240,46.837) (62.575,47.725) (62.910,48.751) (63.245,49.927) (63.580,51.249) (63.915,52.688) (64.250,54.170) (64.585,55.567) (64.920,56.707) (65.255,57.436) (65.590,57.683) (65.925,57.495) (66.260,56.996) (66.595,56.323) (66.930,55.584) (67.265,54.849) (67.600,54.155) (67.935,53.521) (68.270,52.950) (68.605,52.442) (68.940,51.990) (69.275,51.589) (69.610,51.235) (69.945,50.920) (70.280,50.640) (70.615,50.391) (70.950,50.168) (71.285,49.969) (71.620,49.790) (71.955,49.630) (72.290,49.486) (72.625,49.355) (72.960,49.238) (73.295,49.131) (73.630,49.035) (73.965,48.947) (74.300,48.868) (74.635,48.795) (74.970,48.730) (75.305,48.669) (75.640,48.614) (75.975,48.564) (76.310,48.519) (76.645,48.477) (76.980,48.438) (77.315,48.403) (77.650,48.371) (77.985,48.341) (78.320,48.314) (78.655,48.289) (78.990,48.266) (79.325,48.245) (79.660,48.226) (79.995,48.208) (80.330,48.192) (80.665,48.177) (81.000,48.163)};
\end{scope}
\begin{scope}
\clip (14,12) rectangle (81,32);
\draw[gridblue,line width=1pt] (14,19.5) -- (81,19.5);
\draw[tradeoffred,densely dashed,line width=0.95pt] plot coordinates {(14.000,25.134) (14.335,25.152) (14.670,25.168) (15.005,25.183) (15.340,25.197) (15.675,25.210) (16.010,25.223) (16.345,25.234) (16.680,25.244) (17.015,25.254) (17.350,25.263) (17.685,25.271) (18.020,25.279) (18.355,25.286) (18.690,25.293) (19.025,25.299) (19.360,25.305) (19.695,25.310) (20.030,25.315) (20.365,25.319) (20.700,25.323) (21.035,25.327) (21.370,25.331) (21.705,25.334) (22.040,25.337) (22.375,25.340) (22.710,25.343) (23.045,25.345) (23.380,25.348) (23.715,25.350) (24.050,25.352) (24.385,25.354) (24.720,25.355) (25.055,25.357) (25.390,25.359) (25.725,25.360) (26.060,25.362) (26.395,25.363) (26.730,25.364) (27.065,25.365) (27.400,25.367) (27.735,25.368) (28.070,25.369) (28.405,25.370) (28.740,25.371) (29.075,25.372) (29.410,25.373) (29.745,25.374) (30.080,25.375) (30.415,25.377) (30.750,25.378) (31.085,25.379) (31.420,25.380) (31.755,25.381) (32.090,25.383) (32.425,25.384) (32.760,25.386) (33.095,25.387) (33.430,25.389) (33.765,25.391) (34.100,25.392) (34.435,25.394) (34.770,25.397) (35.105,25.399) (35.440,25.401) (35.775,25.404) (36.110,25.407) (36.445,25.409) (36.780,25.413) (37.115,25.416) (37.450,25.420) (37.785,25.424) (38.120,25.428) (38.455,25.432) (38.790,25.437) (39.125,25.442) (39.460,25.448) (39.795,25.454) (40.130,25.460) (40.465,25.467) (40.800,25.474) (41.135,25.482) (41.470,25.490) (41.805,25.499) (42.140,25.509) (42.475,25.519) (42.810,25.529) (43.145,25.541) (43.480,25.552) (43.815,25.565) (44.150,25.578) (44.485,25.593) (44.820,25.607) (45.155,25.623) (45.490,25.640) (45.825,25.657) (46.160,25.675) (46.495,25.694) (46.830,25.713) (47.165,25.734) (47.500,25.755) (47.835,25.778) (48.170,25.801) (48.505,25.825) (48.840,25.850) (49.175,25.875) (49.510,25.902) (49.845,25.930) (50.180,25.958) (50.515,25.988) (50.850,26.018) (51.185,26.050) (51.520,26.083) (51.855,26.117) (52.190,26.152) (52.525,26.189) (52.860,26.227) (53.195,26.267) (53.530,26.308) (53.865,26.352) (54.200,26.398) (54.535,26.447) (54.870,26.498) (55.205,26.553) (55.540,26.611) (55.875,26.673) (56.210,26.739) (56.545,26.810) (56.880,26.886) (57.215,26.969) (57.550,27.058) (57.885,27.154) (58.220,27.259) (58.555,27.372) (58.890,27.496) (59.225,27.631) (59.560,27.777) (59.895,27.937) (60.230,28.110) (60.565,28.298) (60.900,28.500) (61.235,28.716) (61.570,28.942) (61.905,29.172) (62.240,29.396) (62.575,29.594) (62.910,29.736) (63.245,29.769) (63.580,29.617) (63.915,29.171) (64.250,28.302) (64.585,26.896) (64.920,24.937) (65.255,22.575) (65.590,20.133) (65.925,17.976) (66.260,16.358) (66.595,15.342) (66.930,14.844) (67.265,14.718) (67.600,14.826) (67.935,15.064) (68.270,15.365) (68.605,15.686) (68.940,16.005) (69.275,16.309) (69.610,16.592) (69.945,16.853) (70.280,17.092) (70.615,17.308) (70.950,17.504) (71.285,17.682) (71.620,17.843) (71.955,17.988) (72.290,18.120) (72.625,18.239) (72.960,18.347) (73.295,18.445) (73.630,18.534) (73.965,18.615) (74.300,18.689) (74.635,18.757) (74.970,18.818) (75.305,18.874) (75.640,18.925) (75.975,18.972) (76.310,19.015) (76.645,19.054) (76.980,19.090) (77.315,19.123) (77.650,19.153) (77.985,19.180) (78.320,19.206) (78.655,19.229) (78.990,19.251) (79.325,19.270) (79.660,19.288) (79.995,19.305) (80.330,19.320) (80.665,19.334) (81.000,19.347)};
\end{scope}
\node[text=gridblue,align=center,fill=white,inner sep=0.3mm] at (30,46.5) {exact model\\$D_v^0=D_v$};
\draw[gridblue,-{Stealth[length=1.2mm]}] (37,44.5) -- (42.46,40.9);
\node[text=tradeoffred,align=center,fill=white,inner sep=0.3mm] at (53,51.5) {30\% model error\\$D_v^0=1.3D_v$};
\draw[tradeoffred,-{Stealth[length=1.2mm]}] (60,49.8) -- (58.34,42.2);
\node[text=black!75,fill=white,inner sep=0.3mm] at (66,62.5) {amplification};
\draw[black!75,-{Stealth[length=1.2mm]}] (65,61) -- (65.61,57.5);
\node[text=gridblue,fill=white,inner sep=0.3mm] at (35,16) {preserved: $D_v^0=D_v$};
\draw[gridblue,-{Stealth[length=1.2mm]}] (47,17.3) -- (51,19.5);
\node[text=tradeoffred,fill=white,inner sep=0.3mm] at (39,29.3) {distorted: $D_v^0=1.3D_v$};
\draw[tradeoffred,-{Stealth[length=1.2mm]}] (48,27.8) -- (51,26.4);
\node at (47,36) {(a)};
\node at (47,4) {Frequency (Hz)};
\node at (47,-1) {(b)};
\end{tikzpicture}
\caption{Load-model mismatch for $D_v=0.7$, $G_p=-0.15/(1+0.025s)$, $G_g=1/(1+0.025s)$, and $u=F(p-D_v^0v)$, with $F$ as in Fig.~\ref{fig:bode_area}. Here $D_v^0$ is the nominal compensation parameter. (a) Server-demand leakage. (b) Grid-response amplitude ratio. Both are normalized by their uncontrolled responses. Shaded bands: 1--4 and 8--20 Hz.}
\label{fig:cross_band_limit}
\end{figure}

\subsection{Implications for Controller Design}

The preceding limits require joint design of measurements and feedback dynamics to distinguish sources, allocate performance across frequency bands, and maintain grid-response accuracy under model variation. The major implications for the controller designs are: 

First, select measurements with distinct sources. The source signatures should satisfy \eqref{eq:rank_condition} and remain well conditioned across frequencies and operating points. A power/voltage or power/frequency subset is sufficient when it meets these requirements. Second, specify isolation depth and bandwidth together with the grid-response band and acceptable out-of-band amplification. Account for the integral constraint \eqref{eq:bode_general}, including actuator and measurement dynamics. Constrain sensitivity peaks outside the isolation band, particularly where accurate grid support is required. Third, jointly minimize server-demand leakage $T_w$ and grid-response error $T_g-T_g^\star$ over different load and grid conditions. Since the actual $S_u$ multiplies the matching residual in \eqref{eq:grid_tracking_mismatch}, address model mismatch and sensitivity peaks together.

\section{Source-Selective Controller Synthesis}
\label{sec:controller}

This section designs a controller for the isolation and grid-response objectives in \eqref{eq:two_objectives}. Measurement selection follows the source-rank condition \eqref{eq:rank_condition}, while isolation bandwidth and allowable sensitivity peaks account for the integral constraint \eqref{eq:bode_general}. The design also addresses the model-mismatch effect in \eqref{eq:grid_tracking_mismatch} across operating points.

\subsection{Measurement Residual and Physical Design Requirements}
We first derive the controller for the nominal plant. In \eqref{eq:port_measurement}, the nominal HESS contribution to the measurements is $\bm{h}_u^0u$. Subtracting it gives $\bm{r}_y=\bm{y}-\bm{h}_u^0u=\bm{h}_w^0w+\bm{h}_g^0g$ when the plant matches the model. The residual-feedback control law is
\begin{equation}
u=-\bm{q}^{\mathsf T}(s)\bm{r}_y = -\bm{q}^{\mathsf T}(s)(\bm{y}-\bm{h}_u^0u).
 \label{eq:controller}
\end{equation}
Here $\bm{q}(s)$ contains the controller transfers from the measurement residuals to HESS power. It determines how the power, voltage, and frequency residuals drive HESS charging or discharging to reduce server-demand-induced grid import while tracking the specified grid response. With a stable proper $\bm{q}$ and a well-posed realization, this is a stable-plant internal-model-control construction~\cite{garcia1982imc}. Equation~\eqref{eq:controller} gives the equivalent feedback in Fig.~\ref{fig:framework}, $\bm{K}=\bm{q}/(1-\bm{q}^{\mathsf T}\bm{h}_u^0)$. With an exact nominal model, substitution into \eqref{eq:general_feedback_maps} and \eqref{eq:return_sensitivity} gives
\begin{equation}
 \begin{aligned}
 T_i^0=P_i^0-P_u^0\bm{q}^{\mathsf T}\bm{h}_i^0,\quad i\in\{w,g\},\quad S_u^0=1-\bm{q}^{\mathsf T}\bm{h}_u^0.
 \end{aligned}
 \label{eq:closed_loop_maps}
\end{equation}

We design $\bm q$ to suppress server-induced PCC-power fluctuations while delivering the grid response $T_g^\star$. To account for the cross-band limitation in \eqref{eq:bode_general}, we also constrain out-of-band amplification. These objectives must hold at each considered operating point:
\begin{equation}
	\begin{alignedat}{2}
		\text{Isolation:}\quad &|T_w/P_w|\le\epsilon_w, &\quad&\omega\in B,\\
		\text{Grid response:}\quad &|T_g-T_g^\star|\le\epsilon_g, &\quad&\omega\in\Omega_g,\\
		\text{Peak limit:}\quad &|S_u|\le M, &\quad&\omega\in[0,\Omega_c]\setminus B.
	\end{alignedat}
	\label{eq:design_requirements}
\end{equation}
Here $B\subset[0,\Omega_c]$ is the server-demand isolation band, $\Omega_g\subset[0,\Omega_c]$ is the required grid-response band, and $\Omega_c$ is the upper angular frequency of the finite-band assessment. For example, a 1--20-Hz isolation requirement corresponds to $B=2\pi[1,20]$ rad/s; $B$ may also comprise disjoint intervals. The grid-response band is chosen according to the support target $T_g^\star$ and need not coincide with $B$; $\Omega_c$ must extend beyond the sensitivity peaks being assessed. All transfers are evaluated at $s=j\omega$, and $P_w\ne0$ is assumed on $B$. The first row bounds PCC-power ripple relative to the uncontrolled response to the same server-demand input, so $\epsilon_w$ is dimensionless. The second row bounds the complex grid-response error, including both magnitude and phase. Since $S_u=T_w/P_w$ wherever $P_w\ne0$, the third row uses $M$ to limit amplification relative to the uncontrolled server-demand response outside $B$ and below $\Omega_c$.

\subsection{Model Variation}
Let $\rho$ describe operating conditions, including average server loading and grid strength. Keep the implemented nominal model $\bm{h}_u^0$ fixed and define $\delta\bm{h}_u=\bm{h}_u(\rho)-\bm{h}_u^0$, including both modeling error and operating-point variation. The residual is now $\bm{r}_y=\bm{y}-\bm{h}_u^0u=\bm{h}_ww+\bm{h}_gg+\delta\bm{h}_uu$. The scalar $\eta=\bm{q}^{\mathsf T}\delta\bm{h}_u$ is the unintended feedback remaining after self-response subtraction. The controller remains $\bm{K}=\bm{q}/(1-\bm{q}^{\mathsf T}\bm{h}_u^0)$. Substitution into \eqref{eq:controller}, \eqref{eq:port_power}, and \eqref{eq:return_sensitivity} gives the actual responses
\begin{equation}
	\begin{aligned}
		D&=1+\eta,\qquad T_i-T_i^\star=N_i/D,\\
		N_i&=(P_i-T_i^\star)D-P_u\bm q^{\mathsf T}\bm h_i,
		\quad i\in\{w,g\},
	\end{aligned}
	\label{eq:error_numerators}
\end{equation}
where $T_w^\star=0$. The numerator $N_i$ collects the uncompensated terms in each tracking error; $D$ captures feedback left by imperfect self-response subtraction. Since $S_u=(1-\bm q^{\mathsf T}\bm h_u^0)/D$, this is also the mismatch-amplification mechanism of Section~III: reducing the numerators alone is insufficient.

Equation~\eqref{eq:error_numerators} also exposes a stability issue: stable controller filters do not prevent mismatch feedback from creating unstable zeros of $D$. Internal stability of the plant with incremental HESS power $u=0$ is checked at each design operating point before synthesis. A reference controller $\bm q_b$ then defines $D_b=1+\bm q_b^{\mathsf T}\delta\bm h_u$; $\bm q_b=\bm0$ gives the stable reference $D_b=1$, while the cases use a verified stable engineering controller. We constrain $D/D_b$ to preserve stability while optimizing isolation and support.

\begin{proposition}[Reference-centered stability certificate]
\label{prop:centered_stability}
Assume the plant with $u=0$ is internally stable, $\bm h_u^0$ and $\bm q,\bm q_b$ are stable proper, both residual interconnections are well posed, and $D_b^{-1}\in\RHinf$. Let $a=D/D_b$ be proper with $a(\infty)=1$. If a constant $\delta_0>0$ satisfies
\begin{equation}
 \operatorname{Re}a(j\omega,\rho)\ge\delta_0
 \quad\text{for all }\omega\in\mathbb R
 \label{eq:residual_robustness}
\end{equation}
at each considered $\rho$, then $D^{-1}\in\RHinf$ and the residual feedback is internally stable.
\end{proposition}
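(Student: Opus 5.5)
The plan is to show that $D$ has no zeros in the closed right half-plane by writing $D = a D_b$, and then to reduce internal stability of the residual feedback to invertibility of $D$ in $\RHinf$.

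First I establish that $D$ and $D_b$ are stable proper. Since the plant with $u=0$ is internally stable, the actual self-response $\bm h_u(\rho)$ is stable proper. By assumption $\bm h_u^0$ and $\bm q,\bm q_b$ are stable proper. Hence $\delta\bm h_u\in\RHinf$, and $D=1+\bm q^{\mathsf T}\delta\bm h_u$ and $D_b=1+\bm q_b^{\mathsf T}\delta\bm h_u$ both lie in $\RHinf$. Because $D_b^{-1}\in\RHinf$, $D_b$ is a unit of $\RHinf$. It follows that $a=D/D_b=D\,D_b^{-1}$ is itself in $\RHinf$: it is analytic in the closed right half-plane and proper, not merely proper. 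It therefore suffices to show $a^{-1}\in\RHinf$, since then $D^{-1}=D_b^{-1}a^{-1}\in\RHinf$.

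Next I show that $a$ has no zeros in $\overline{\mathbb C}_+$, including infinity. On the imaginary axis, condition \eqref{eq:residual_robustness} gives $|a(j\omega)|\ge\operatorname{Re}a(j\omega)\ge\delta_0>0$, and $a(\infty)=1$ covers the point at infinity. For the open right half-plane, $\operatorname{Re}a$ is harmonic and bounded on $\mathbb C_+$, being the real part of a bounded analytic function. By the Poisson representation, or equivalently the maximum principle applied to $e^{-a}$, the bound $\operatorname{Re}a\ge\delta_0$ extends from the boundary $j\mathbb R\cup\{\infty\}$ to all of $\overline{\mathbb C}_+$. So $a$ is zero-free there, and $1/a$ is bounded by $1/\delta_0$ on the closed right half-plane. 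Since $a$ is real-rational, $1/a$ is then proper, rational, and has no poles in $\overline{\mathbb C}_+$, so $a^{-1}\in\RHinf$. As an alternative, I could use a Nyquist argument: $a(j\omega)$ stays in the half-plane $\operatorname{Re}\ge\delta_0$ and so has zero winding about the origin. Together with the absence of right-half-plane poles of $a$, this gives no right-half-plane zeros. I would mention this as a cross-check.

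Finally I conclude internal stability of the residual interconnection. Closing $u=-\bm q^{\mathsf T}(\bm y-\bm h_u^0u)$ around the actual plant, every closed-loop map from exogenous inputs and injected loop signals ($w,g$, and disturbances at $u$ and $\bm y$) to internal signals can be written as a polynomial in the stable blocks $\bm q$, $\bm h_u^0$, $\bm h_u$, $\bm h_w$, $\bm h_g$, $P_i$, multiplied by $D^{-1}$. For example, $u=-D^{-1}\bm q^{\mathsf T}(\bm h_ww+\bm h_gg+\dots)$, consistent with \eqref{eq:error_numerators}. Well-posedness makes these expressions valid. Since all factors are in $\RHinf$, all closed-loop maps are stable. \textbf{Main obstacle:} the delicate point is this last bookkeeping step. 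It requires checking that $D$ is the only denominator, with no hidden cancellation of unstable modes through the internal realization of $\bm h_u^0$ inside the controller. I would handle it by exploiting stability of $\bm h_u^0$, which makes the IMC realization itself stable, so that no unstable pole-zero cancellation can arise.
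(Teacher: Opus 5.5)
Your proposal is correct and follows the paper's own argument: stable factors give $a=D\,D_b^{-1}\in\RHinf$, and the harmonic minimum principle extends $\operatorname{Re}a\ge\delta_0$ to the closed right half-plane, so $a^{-1}\in\RHinf$ and $D^{-1}=a^{-1}D_b^{-1}\in\RHinf$ (the paper's bound $\min(\delta_0,1)$ coincides with yours, since $a(j\omega)\to1$ forces $\delta_0\le1$). Your explicit bookkeeping of the residual-loop maps, with $D$ as the only denominator and stable $\bm h_u^0$ ruling out hidden unstable cancellations, fills in the internal-stability conclusion that the paper leaves implicit.
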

\begin{proof}
The stable factors give $a\in\RHinf$. The boundary condition and $a(\infty)=1$ imply $\operatorname{Re}a(s)\ge\min(\delta_0,1)>0$ throughout the closed right half-plane by the harmonic minimum principle. Hence $a^{-1}$ is stable proper, and so is $D^{-1}=a^{-1}D_b^{-1}$.
\end{proof}

We now enforce \eqref{eq:design_requirements}. At each sampled frequency and operating point, use separate margins $\delta_y,\delta_u\ge\delta_0>0$ for performance and measurement gain, respectively. Either bounds $|D/D_b|$ from below; in particular, $|N_i/D_b|/\delta_y$ bounds the actual error.

\subsection{Joint Performance Design}

To select an implementable controller, fix stable proper basis functions $b_\ell$ and parameterize
\begin{equation}
	\bm q(s)=F_b(s)\sum\nolimits_{\ell=1}^{N}\bm\theta_\ell b_\ell(s),
	\label{eq:basis}
\end{equation}
where $F_b$ represents HESS power tracking. The coefficients map residuals to the command $u_c=-[\sum_\ell\bm\theta_\ell b_\ell]^{\mathsf T}\bm r_y$, and $u=F_bu_c$ includes actuator bandwidth. This frequency-shaping parameterization~\cite{feng2026participation} makes $D$ and $N_i$ affine in the coefficients. We select feasible coefficients using
\begin{equation}
 \min_{\{\bm\theta_\ell\},\,\{\delta_y,\delta_u\}}
 \sum_{(\omega,\rho)\in\mathcal D}
 W_w|N_w|^2+W_g|N_g|^2+
 W_u\|\bm q^{\mathsf T}\bm\Sigma\|_2^2,
 \label{eq:qp}
\end{equation}
where $\mathcal D$ samples frequencies and operating points, and $\bm\Sigma$ scales $(p,v,f)$ perturbations. Fixed nonnegative weights include quadrature and averaging; $W_w$ includes $\max(|P_w|,10^{-4})^{-2}$ for isolation normalization. The last term penalizes measurement-to-HESS gain.

The constraints are
\begin{subequations}
 \label{eq:robust_performance_bounds}
 \begin{align}
 \operatorname{Re}(D/D_b)&\ge\max\{\delta_y,\delta_u\},\label{eq:constraint_stability}\\*
 |N_w/D_b|&\le\delta_y\epsilon_w|P_w|,\label{eq:constraint_isolation}\\*
 |N_g/D_b|&\le\delta_y\epsilon_g,\label{eq:constraint_grid}\\*
 |(1-\bm q^{\mathsf T}\bm h_u^0)/D_b|&\le\delta_y M,\label{eq:constraint_peak}\\*
 \nu_b+\|(\bm q-\bm q_b)^{\mathsf T}\bm\Sigma/D_b\|_2
 &\le\delta_u\nu .\label{eq:constraint_gain}
 \end{align}
\end{subequations}
Here $\bm q_b$ and $D_b$ are fixed; optimizing the coefficients changes $\bm q$, $D$, and $N_i$. Constraint~\eqref{eq:constraint_stability} gives $|D|\ge\delta_y|D_b|$ and $|D|\ge\delta_u|D_b|$; when enforced at all frequencies, it ensures stability under Proposition~\ref{prop:centered_stability}. The next three rows impose isolation, grid tracking, and out-of-band sensitivity limits through $\delta_y$; for example, $|T_g-T_g^\star|\le|N_g/D_b|/\delta_y\le\epsilon_g$. The last row uses $\delta_u$ and the triangle inequality to bound scaled measurement-to-HESS gain, with $\nu_b=\|\bm q_b^{\mathsf T}\bm\Sigma/D_b\|_2$. Stability and gain constraints apply at all samples, while $\epsilon_g$ may vary with frequency. Sampled solutions require separate pole and peak checks; $D_b$ provides the fixed normalization.

Fixing $D_b$ makes the two real-part inequalities affine and the norm bounds conic, yielding an SOCP with a convex quadratic objective~\cite{boyd2004convex}; $\delta_y$ and $\delta_u$ are margins selected during offline design. For transient-only support, a washout enforces $\bm q(0)=\bm0$ and zero steady-state incremental HESS power, requiring $T_g^\star(0)=P_g(0)$ for exact tracking.

\section{Case Studies}
\label{sec:case_studies}

\subsection{Settings, Targets, and Metrics}
\label{sec:case_settings}

\subsubsection{Physical systems and operating points}
Fig.~\ref{fig:case_topologies} shows two test systems: a single-generator interconnection and a modified IEEE nine-bus network. Both retain the released server/converter and grid-device simulation models of~\cite{zhao2026datacenter}, including AFE, UPS, VSI, and PSU dynamics, on 100-MVA, 60-Hz bases. Modified cases and reproduction scripts are available online.\footnote{\url{https://github.com/VictorCFeng/Data-Center-Oscillation}} In the single-generator case, a finite synchronous machine ($H=3$ s, native excitation/governor) supplies a 50-MW data center through $0.02+j0.19$ pu impedance; the voltage is initialized at 1.04 pu and evolves with the machine dynamics. The nine-bus case retains the released branches, with synchronous, grid-forming, and grid-following units at Buses~1--3 and 62.5/50-MW data centers at Buses~5/8. Remaining loads are passive impedances. Each PCC has a HESS with $F_b=200/(s+200)$ and a $\pm20$-MW aggregate output limit. Controllers use local power, voltage, and AFE-PLL frequency estimates, retaining network and PLL dynamics. Loading and line-impedance multipliers $(\ell,z)$ define thirteen separately initialized equilibria (Table~\ref{tab:case_settings}). Previously inspected points 6--9 are used for evaluation and excluded from coefficient fitting. Nonlinear time-domain simulations are performed in MATLAB R2023b using \texttt{ode15s}. Controller synthesis is implemented in Python using CVXPY~\cite{diamond2016cvxpy} with the Clarabel solver~\cite{goulart2024clarabel}.

\begin{figure}[!t]
\centering
\begin{tikzpicture}[x=1mm,y=1mm,scale=0.7,transform shape,font=\scriptsize,>=Stealth,
  wire/.style={line width=0.65pt},
  bus/.style={line width=1.4pt},
  device/.style={draw,rounded corners=1pt,align=center,inner sep=1.2mm,fill=lightblue,draw=gridblue},
  storage/.style={device,fill=lightgreen,draw=supportgreen},
  event/.style={draw=tradeoffred,-{Stealth[length=1.5mm]},line width=0.65pt}]
\path[use as bounding box] (0,0) rectangle (85,71);
\node[anchor=west] at (0,68) {(a)};
\node[draw,circle,minimum size=6mm,inner sep=0pt] (smA) at (7,59) {$\sim$};
\node[below=1mm] at (smA.south) {SM};
\draw[wire] (smA.east) -- (19,59) -- (48,59) -- (60,59);
\draw[bus] (19,57) -- (19,61);
\draw[bus] (48,57) -- (48,61);
\node[below left=0.3mm] at (19,57) {1};
\node[below right=0.3mm] at (48,57) {2 (PCC)};
\node[above=1mm] at (33.5,59) {$0.02+j0.19$ pu};
\node[device,minimum width=22mm,minimum height=8mm] (dcA) at (71,59) {Data center\\50 MW};
\node[storage,minimum width=17mm,minimum height=6mm] (hessA) at (48,49) {HESS};
\draw[wire] (48,59) -- (hessA.north);
\draw[event] (19,69) -- (19,61.5);
\node[anchor=west,text=tradeoffred] at (20,68) {$g$};
\draw[event] (71,69) -- (dcA.north);
\node[anchor=west,text=tradeoffred] at (72,68) {$w$};
\node[anchor=west] at (0,41) {(b)};
\coordinate (b4) at (28,31);
\coordinate (b5) at (28,20);
\coordinate (b6) at (28,9);
\coordinate (b7) at (57,9);
\coordinate (b8) at (57,20);
\coordinate (b9) at (57,31);
\draw[wire] (b4) -- (b5) -- (b6) -- (b7) -- (b8) -- (b9) -- (b4);
\foreach \n in {4,5,6,7,8,9} {\fill (b\n) circle (0.65mm);}
\node[below right=0.5mm] at (b4) {4};
\node[right=1mm] at (b5) {5};
\node[above right=0.5mm] at (b6) {6};
\node[above left=0.5mm] at (b7) {7};
\node[left=1mm] at (b8) {8};
\node[below left=0.5mm] at (b9) {9};
\node[draw,circle,minimum size=5mm,inner sep=0pt] (smB) at (28,40) {1};
\draw[wire] (smB.south) -- (b4);
\node[right=1mm] at (smB.east) {SM};
\node[draw,circle,minimum size=5mm,inner sep=0pt] (gfl) at (7,9) {3};
\draw[wire] (gfl.east) -- (b6);
\node[below=0.7mm] at (gfl.south) {GFL};
\node[draw,circle,minimum size=5mm,inner sep=0pt] (gfm) at (78,9) {2};
\draw[wire] (b8) -- (61,20) -- (61,9) -- (gfm.west);
\node[below=0.7mm] at (gfm.south) {GFM};
\node[storage,minimum width=22mm,minimum height=8mm] (site1) at (11,20) {Data center 1\\62.5 MW + HESS};
\draw[wire] (site1.east) -- (b5);
\node[storage,minimum width=22mm,minimum height=8mm] (site2) at (75,20) {Data center 2\\50 MW + HESS};
\draw[wire] (b8) -- (site2.west);
\fill (61,20) circle (0.5mm);
\draw[event] (11,30) -- (site1.north);
\node[anchor=west,text=tradeoffred] at (12,28) {$w_1$};
\draw[event] (75,30) -- (site2.north);
\node[anchor=west,text=tradeoffred] at (76,28) {$w_2$};
\draw[event] (28,1) -- (28,8);
\node[anchor=west,text=tradeoffred] at (29,3) {$g$};
\end{tikzpicture}
\caption{Test-system topologies. (a) Single-generator interconnection with one data center. (b) Two data centers in the modified IEEE nine-bus network. Numbers identify buses; $w_i$ enters server demand and $g$ is the external-grid disturbance. Each data center has a PCC-connected HESS; passive shunt loads are omitted.}
\label{fig:case_topologies}
\end{figure}

\begin{table}[!t]
\caption{Operating points and common requirements}
\label{tab:case_settings}
\centering\footnotesize
\begin{tabular}{@{}p{0.21\columnwidth}p{0.74\columnwidth}@{}}
\toprule
Item & Setting\\
\midrule
Points 1--5 & $(1,1),(0.8,1),(1.2,1),(1,0.8),(1,1.2)$\\
Points 6--9 & $(0.9,0.9),(0.9,1.1),(1.1,0.9),(1.1,1.1)$\\
Points 10--13 & $(0.8,0.8),(0.8,1.2),(1.2,0.8),(1.2,1.2)$\\
Design points & 1--5 and 10--13\\
Isolation & 1--20 Hz; leakage $\le$0.65\\
Grid service & 0.005--2 / 0.005--5.5 Hz; error$\le$0.35/0.45 (one/two data centers)\\
Out-of-band & Leakage $\le$2.8 outside 1--20 Hz, over 0.001--200 Hz\\
Meas. scales $\bm\Sigma$ & $(0.003\text{ pu},0.001\text{ pu},0.01\text{ Hz})$ per site's $(p,v,f)$\\
Gain limit & 1.10 MW (one data center); $\sqrt{2}$ MW (two data centers)\\
\bottomrule
\end{tabular}
\end{table}

\subsubsection{Required grid service}
Here $T_g^\star$ is the closed-loop response under the reference storage law $u_i=-H_{f,i}^{\rm ref}f_i$ ($u_i$ in pu, $f_i$ in Hz), where $H_{f,i}^{\rm ref}=F_b\bigl[0.2\frac{s}{s+0.1}\frac{4\pi}{s+4\pi}+k_i\frac{s}{s+\pi}\frac{24\pi}{s+24\pi}\frac{1+s/(2\pi z_i)}{1+s/(2\pi p_i)}\bigr]$.
The first term combines 0.2-pu/Hz transient droop, a 10-s washout, and a 2-Hz low-pass. The damping branch combines a 0.5--12-Hz band-pass and lead-lag gain $k_i$, zero $z_i$, and pole $p_i$ (Hz); $F_b$ models HESS tracking. The single-generator case uses droop alone ($k_1=0$). For the network, a grid-only search over points 1--5 tunes the coefficients for frequency/voltage response and modal damping, giving $(k_i,z_i,p_i)=(0.919112,11.683569,14.142877)$ at Bus~5 and $(-0.773512,12,2.012619)$ at Bus~8. Nominal damping ratios near 2.25/3.89 Hz increase from 5.91\%/15.06\% to 10.71\%/18.71\%. With $w=0$, \eqref{eq:general_feedback_maps} gives $T_g^\star$ using $K_p=K_v=0$, $K_f=H_f^{\rm ref}$, and both reference loops closed for the network. The reference coefficients remain fixed across operating points.

\subsubsection{Disturbances and evaluation}
Three 30-s tests apply server fluctuations alone, a grid-load disturbance alone, and both. Server input over 2--26 s combines a 2.5-MW linear 2--11.6-Hz chirp and a 1.2-MW, 13-Hz sinusoid with 0.2-s ramps. The sites' chirp phases differ by $60^\circ$; their 13-Hz components are in phase. This synthetic excitation tests isolation under frequency variation. At $t=5$ s, a resistive load is switched on at Bus~1 (single-generator case) or Bus~6 (nine-bus case), drawing $5|V(t)/V(0)|^2$ MW. The load is rated at 5 MW at the initial bus voltage. Outputs are sampled every 4 ms, refined to 0.1 ms over 5--5.25 s for the grid transient.

Define server leakage $A=\max_j A_j$, with $A_j=\|\bm T_{w,j}\|_2/\max(\|\bm P_{w,j}\|_2,10^{-4})$, and grid error $E_g=\|\bm T_g-\bm T_g^\star\|_2/\max(\|\bm T_g^\star\|_2,0.05)$. The columns collect responses at all PCCs, avoiding cancellation between sites. The grid-error requirement is $E_g\le\bar\epsilon_g$, with $\bar\epsilon_g$ from Table~\ref{tab:case_settings}. Measurement-to-HESS gain is the peak singular value after $\bm\Sigma$ scaling, in MW. Time-domain error is measured relative to zero for the server-only event and to the same-point nonlinear grid-only reference otherwise. RMS is the square root of the time integral of the summed squared PCC errors divided by the event-window duration: 2--26 s for server excitation, 5--30 s for the grid step, and 2--30 s for the combined event. The RMS evaluation includes the initial grid transient. All methods receive identical inputs at each equilibrium.

\subsection{Comparators and Affine Controller Design}
\label{sec:case_controllers}

Seven methods are compared; plots omit the preliminary Lead + support design:
\begin{itemize}[\IEEEnocalcleftmargintrue\setlength{\leftmargin}{0pt}\setlength{\itemindent}{1em}\setlength{\labelwidth}{0.5em}\setlength{\labelsep}{0.5em}]
\setlength{\itemsep}{0pt}
\setlength{\parsep}{0pt}
\item \emph{No HESS}: native plant controls with $u=0$.
\item \emph{Power only}: a power-feedback comparator inspired by the smoothing objective of~\cite{lyu2026pcc}. It uses $p$ as its sole feedback signal and is tuned for isolation.
\item \emph{RBC (adapted)}: the rule-based baseline from~\cite[Sec.~3.1]{you2026hess}, applied to $d=p+u$ with complementary 0.5/8-Hz allocation. We replace the original 0.25/0.015-s actuators, discrete filters, and future-horizon mean with common 5-ms actuators, continuous filters, and an equilibrium schedule.
\item \emph{RBC + support}: adapted RBC plus our frozen frequency-support branch, an addition to~\cite{you2026hess}.
\item \emph{Lead + support}: engineering washout and lead/low-pass compensation of $d$ plus frequency support, tuned at points 1--5.
\item \emph{Lead + support (opt.)}: the same structure retuned at nine design points under Table~\ref{tab:case_settings}, supplying the local reference $\bm q_b$. Search weights are 1/1.5 for log-band leakage/grid error and 0.02 for scaled measurement-to-HESS gain.
\item \emph{Proposed}: SOCP~\eqref{eq:qp} subject to~\eqref{eq:robust_performance_bounds}, designed independently at each data center and then connected.
\end{itemize}

The proposed controller uses $\bm h_u^0=[-1,0,0]^{\mathsf T}$, giving $\bm r_y=[p+u,v,f]^{\mathsf T}$. The corresponding local residual-feedback form of Lead + support (opt.) defines $\bm q_b$. The residual is computed directly from measurements, with self-response approximation error represented by $D$. Each local channel uses $F_bW_h$ times the basis $1$, $a/(s+a)$ for the rates in Table~\ref{tab:case_basis}, and $[a_c/(s+a_c)]^2$, where $W_h=s/(s+2\pi\,0.002)$ and $a_c$ is the last listed rate. The proposed controllers have 6 and 17 states, including actuator dynamics; the engineering comparators have the same orders after minimal realization. The affine designs use 15/45 free coefficients and $(d,v,f)$ feedback; the engineering comparators use 4/5 tuned parameters and $(d,f)$ feedback.

\begin{table}[!t]
\caption{Fixed basis and implemented state budgets}
\label{tab:case_basis}
\centering\footnotesize
\begin{tabular}{@{}lp{0.65\columnwidth}r@{}}
\toprule
Site & Pole rates $a$ (rad/s, rounded) & States\\
\midrule
Single & 0.1, 12.5664, 4398.2297 & 6\\
Bus 5 & 0.1, 12.5664, 3.1416, 75.3982, 88.8623, 2693.2839 & 9\\
Bus 8 & 0.1, 12.5664, 3.1416, 12.6457, 2693.2839 & 8\\
\bottomrule
\end{tabular}
\end{table}

Local nine-bus designs retain the coupled network with the other HESS at $u=0$. Local targets close one reference controller; network evaluation closes both. The $\sqrt{2}$-MW gain ceiling applies to each local design and is checked after interconnection. In \eqref{eq:qp}, weights 1, 3, and 20 multiply normalized server-numerator energy over 1--20 Hz, grid-numerator energy normalized by local target step energy, and $\|\bm q^{\mathsf T}\bm\Sigma\|_2^2$, respectively, averaged over design points. Server and gain terms use log-frequency quadrature; the grid term uses $1/\omega^2$ weighting. Grid and gain terms cover 0.001--$10^4$ Hz. We use $\delta_0=10^{-5}$ and frequency-exchange SOCP, followed by interconnected pole and peak checks.

\subsection{Server-Oscillation Isolation}
\label{sec:case_isolation}

We first apply server disturbances alone ($g=0$) to test whether the controller prevents their propagation into the grid. In Fig.~\ref{fig:case_server_frequency}(a),(b), $A$ measures the remaining PCC-power oscillation relative to no HESS; smaller values mean stronger isolation. Over 1--20 Hz and thirteen operating points, the proposed design reduces the worst $A$ from 0.0581 to 0.0090 in the single-generator case and from 0.0371 to 0.0252 in the nine-bus case, compared with Lead + support (opt.) at the same state budget. Thus the remaining oscillation is at most 0.90\% and 2.52\% of the uncontrolled response, respectively. Panels (c),(d) show the resulting propagation to machine frequency.

\begin{figure}[!t]
\centering
\includegraphics[width=\columnwidth]{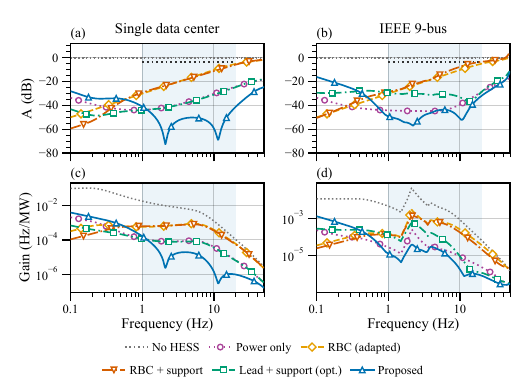}
\caption{Server-input frequency responses. Left: single-generator case; right: modified IEEE nine-bus case. (a),(b) Worst leakage $20\log_{10}A$ over thirteen equilibria. (c),(d) Machine-frequency gain, maximized over equilibria and individual server inputs. Shading marks 1--20 Hz; dotted segments in (a),(b) mark $A=0.65$.}
\label{fig:case_server_frequency}
\end{figure}

Fig.~\ref{fig:case_server_time} shows PCC ripple during the chirp-plus-tone test at point~13, with vertical zooms resolving controlled responses. Table~\ref{tab:case_time_results} reports full-event RMS. Across all operating points, the proposed worst RMS is 0.00625 MW in the single-generator case and 0.01711 MW in the nine-bus case, versus 0.04389 and 0.08831 MW for Lead + support (opt.): reductions of 85.8\% and 80.6\%. Isolation alone is not sufficient: power-only feedback also achieves a small nine-bus server RMS (0.02588 MW), yet its grid-event tracking error is 0.60792 MW, versus 0.00790 MW for the proposed controller. The next subsection therefore tests whether reduced server ripple is accompanied by the required grid support.

\begin{figure}[!t]
\centering
\includegraphics[width=\columnwidth]{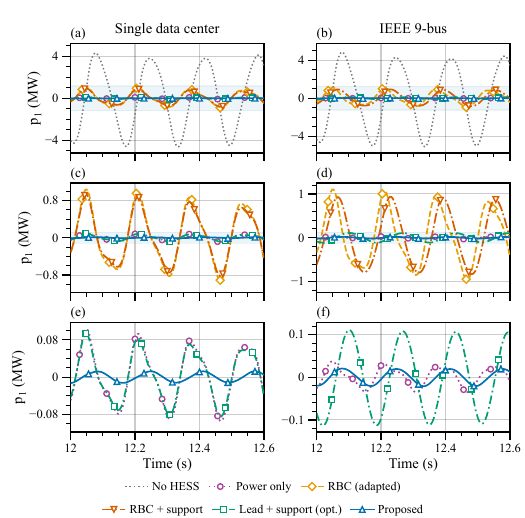}
\caption{Server-only PCC ripple at point~13, with a common 12--12.6-s window. Left: single-generator case; right: Bus~5 of the nine-bus case. Rows show all methods, controlled methods, and the three low-ripple methods (power only, optimized lead/support, and proposed). Shading marks the next row's vertical range. Full-event metrics appear in Table~\ref{tab:case_time_results}.}
\label{fig:case_server_time}
\end{figure}

\subsection{Grid-Response Tracking and Physical Support}
\label{sec:case_support}

Fig.~\ref{fig:case_grid_frequency} shows the magnitude of the complex grid-response error, including phase. Power-only feedback and unaugmented RBC suppress grid-related power variation, resulting in tracking errors near unity in the service band. In the nine-bus case, the proposed worst service-band error decreases from 0.4500 to 0.3691 relative to the optimized engineering comparator. In the single-generator case it increases from 0.0061 to 0.0094, while remaining far below 0.35. These results are summarized in Table~\ref{tab:case_results}.

\begin{figure}[!t]
\centering
\includegraphics[width=\columnwidth]{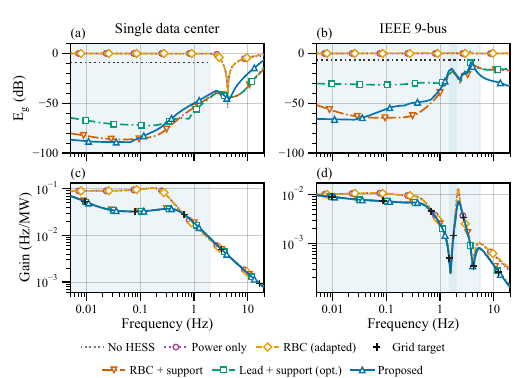}
\caption{Grid-input frequency responses. Left: single-generator case; right: modified IEEE nine-bus case. (a),(b) Worst complex target error $20\log_{10}E_g$ over thirteen equilibria; dotted segments mark the prescribed limits. (c),(d) Machine-frequency gain at point~13; crosses denote the grid-support reference. Shading marks the bands over which grid tracking is specified.}
\label{fig:case_grid_frequency}
\end{figure}

The lower panels of Fig.~\ref{fig:case_grid_frequency} compare the physical frequency response with the grid-support reference. Fig.~\ref{fig:case_grid_time} shows PCC exchange, its tracking error, and machine frequency for the grid step. Zero tracking error corresponds to the prescribed grid-support response. Logarithmic elapsed time resolves the initial transient, which is included in the RMS. Relative to the optimized engineering comparator, worst grid RMS decreases from 0.00715 to 0.00383 MW (46.5\%) in the single-generator case and from 0.03531 to 0.00790 MW (77.6\%) in the nine-bus case. RBC plus support retains a smaller single-generator grid RMS (0.00350 MW), but its nine-bus value (0.02067 MW) and server leakage are larger. Worst machine-frequency drops decrease from 0.5170 to 0.3285 Hz and from 0.05569 to 0.04854 Hz, respectively, relative to no HESS. Other support-capable comparators give similar frequency improvement. The proposed controller's main advantage is therefore better isolation while meeting the prescribed grid-service requirement.

\begin{figure}[!t]
\centering
\includegraphics[width=\columnwidth]{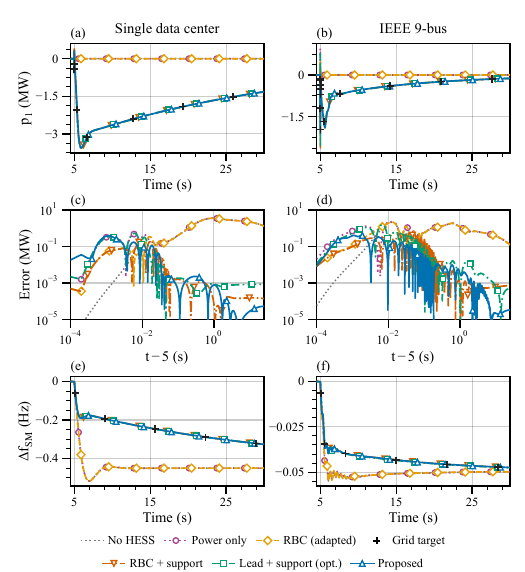}
\caption{Grid step at point~13. Left: single-generator case; right: nine-bus case (PCC power at Bus~5). Rows show PCC power, absolute PCC-target error, and machine-frequency deviation. Crosses denote the grid-support reference. The middle row uses elapsed time $t-5$ on a logarithmic axis; the displayed error range starts at $10^{-5}$ MW.}
\label{fig:case_grid_time}
\end{figure}

\subsection{Operating-Point Variation and Resource Cost}
\label{sec:case_robustness}

We now test the multi-operating-point design in Section~\ref{sec:controller} by applying both disturbances at thirteen loading and network-impedance conditions. Each test starts from its corresponding equilibrium with fixed controller coefficients.

Fig.~\ref{fig:case_operating_points}(a),(b) compares PCC power with the required grid-only response. Tracking this reference requires suppressing server fluctuations while delivering grid support. The proposed design achieves the lowest worst-case combined RMS among the compared methods (Table~\ref{tab:case_results}). Relative to Lead + support (opt.), RMS decreases from 0.04096 to 0.00668 MW in the single-generator case and from 0.08783 to 0.01760 MW in the nine-bus case: reductions of 83.7\% and 80.0\% at the same 6- and 17-state orders. Both methods meet the prescribed requirements at all thirteen points; the proposed design achieves greater tracking accuracy within these limits.

The literature-based RBC + support gives worst combined RMS values of 0.40025/0.67524 MW. The improvement over this comparator supports explicitly shaping both disturbance responses across operating points. In the nine-bus case, the independently designed local controllers retain this advantage when operating together in the coupled network.

\begin{figure}[!t]
\centering
\includegraphics[width=\columnwidth]{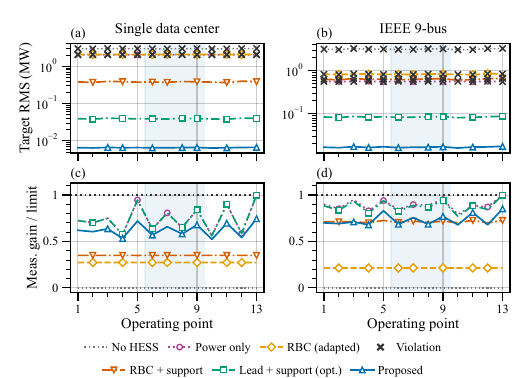}
\caption{Operating-point comparison. Left: single-generator case; right: modified IEEE nine-bus case. (a),(b) Combined-event PCC-target RMS over 2--30 s, including all PCC errors; crosses mark violation of any frequency-domain requirement. (c),(d) Scaled measurement-to-HESS gain divided by its limit; the dotted line is unity. Shading identifies evaluation points 6--9.}
\label{fig:case_operating_points}
\end{figure}

\begin{table}[!t]
\caption{State budget and worst frequency- and time-domain performance}
\label{tab:case_results}
\label{tab:case_time_results}
\centering\scriptsize
\setlength{\tabcolsep}{1.5pt}
\begin{tabular}{@{}lrrrrrrr@{}}
\toprule
Method & States & $\max A$ & $\max E_g$ & Pass & \multicolumn{3}{c}{RMS error (MW)}\\
\cmidrule(l){6-8}
 & & & & & Server & Grid & Combined\\
\midrule
\multicolumn{8}{c}{Single-generator interconnection with one data center}\\
No HESS & 0 & 1.0000 & 1.0000 & 0/13 & 2.40874 & 2.17384 & 3.02516\\
Power only & 4 & 0.0619 & 1.0000 & 0/13 & 0.04729 & 2.17384 & 2.05437\\
RBC (adapted) & 4 & 0.5536 & 1.0000 & 0/13 & 0.46800 & 2.17384 & 2.09398\\
RBC + support & 7 & 0.5146 & 0.0067 & 13/13 & 0.43221 & 0.00350 & 0.40025\\
Lead + support & 6 & 0.0653 & 0.0061 & 13/13 & 0.05052 & 0.00705 & 0.04704\\
Lead + support (opt.) & 6 & 0.0581 & 0.0061 & 13/13 & 0.04389 & 0.00715 & 0.04096\\
Proposed & 6 & 0.0090 & 0.0094 & 13/13 & 0.00625 & 0.00383 & 0.00668\\
\midrule
\multicolumn{8}{c}{Modified IEEE nine-bus, two data centers}\\
No HESS & 0 & 1.0000 & 1.2817 & 0/13 & 3.53852 & 0.60835 & 3.31994\\
Power only & 8 & 0.0370 & 1.0019 & 0/13 & 0.02588 & 0.60792 & 0.57413\\
RBC (adapted) & 8 & 0.5647 & 1.0457 & 0/13 & 0.69275 & 0.60764 & 0.85751\\
RBC + support & 20 & 0.6809 & 0.4948 & 1/13 & 0.72924 & 0.02067 & 0.67524\\
Lead + support & 17 & 0.0393 & 0.4340 & 12/13 & 0.10231 & 0.04012 & 0.10144\\
Lead + support (opt.) & 17 & 0.0371 & 0.4500 & 13/13 & 0.08831 & 0.03531 & 0.08783\\
Proposed & 17 & 0.0252 & 0.3691 & 13/13 & 0.01711 & 0.00790 & 0.01760\\
\bottomrule
\end{tabular}
\par\smallskip
\begin{minipage}{\columnwidth}\scriptsize Maxima are over thirteen equilibria and, for $A,E_g$, their specified bands. Pass requires stability and all frequency-domain limits at the tested equilibria. RMS windows are 2--26, 5--30, and 2--30 s; grid/combined errors use the same-point support reference.\end{minipage}
\end{table}

Fig.~\ref{fig:case_operating_points}(c),(d) shows normalized measurement-to-HESS gains below one at all tested points, satisfying the prescribed gain budget. Peak per-site storage power is 7.846/6.074 MW, within the 20-MW rating; lossless energy excursions are 14.60/2.79 kWh in the single-generator/nine-bus cases. Thus fixed low-order local controllers provide accurate simultaneous isolation and support across the tested conditions within the gain and power limits.

\section{Conclusion}

This paper developed a framework for simultaneous data-center oscillation isolation and grid support. The coupled model identifies a source-rank condition and a pointwise performance conflict under power-only feedback. Additional measurements enable source-selective control when the rank condition holds, while a conditional sensitivity integral constrains isolation across frequency bands. These results guide a convex controller design that coordinates isolation, grid-response tracking, and stability across operating points. In the single-generator and modified nine-bus systems, fixed local controllers satisfied the prescribed requirements at thirteen equilibria. Worst combined-event power-error RMS decreased by 83.7\% and 80.0\% relative to equal-order optimized comparators, demonstrating the benefit of jointly shaping both disturbance responses. Future work will extend the framework to state-of-charge and ramp constraints and large-disturbance operation.

\end{document}